\newtheorem{theorem}{Theorem}
\newtheorem{prop}[theorem]{Proposition}
\newtheorem{corollary}[theorem]{Corollary}
\newtheorem{lemma}[theorem]{Lemma}
\newtheorem{defi}[theorem]{Definition}
\newtheorem{example}[theorem]{Example}
\def\ord{\operatorname{ord}}
\def\<#1>{\langle#1\rangle}
\def\defect{\operatorname{defect}}
\newcommand{\bQ}{ {\mathbb Q}}
\newcommand{\bC}{ {\mathbb C}}
\newcommand{\bN}{ {\mathbb N}}
\newcommand{\cX}{ {\mathcal X}}
\let\set\mathbb
\def\lc{\operatorname{lc}}
\def\lt{\operatorname{lt}}
\def\im{\operatorname{im}}
\def\diag{\operatorname{diag}}
\begin{document}

\begin{frontmatter}

  \title{Reduction-Based Creative~Telescoping for Fuchsian D-finite~Functions\footnote{S.\ Chen was supported by the NSFC grant
11501552 and by the President Fund of the Academy of
Mathematics and Systems Science, CAS (2014-cjrwlzx-chshsh).\\
M. van Hoeij was supported by the National Science Foundation (NSF) grant 1618657.\\
M. Kauers was supported by the Austrian Science Fund (FWF) grants F50-04 and Y464-N18.\\
C. Koutschan was supported by the Austrian Science Fund (FWF) grant W1214.
}}

  \author{Shaoshi Chen}
  \address{KLMM,\, AMSS, \,Chinese Academy of Sciences, Beijing, 100190, (China)}
  \ead{schen@amss.ac.cn}

  \author{Mark van Hoeij}
  \address{Department of Mathematics,  Florida State University, FL 32306-4510, (USA)}
  \ead{hoeij@math.fsu.edu}

  \author{Manuel Kauers}
  \address{Institute for Algebra, Johannes Kepler University, Altenberger Stra\ss e 69, A-4040 Linz, (Austria)}
  \ead{manuel.kauers@jku.at}

  \author{Christoph Koutschan}
  \address{RICAM, Austrian Academy of Sciences, Altenberger Stra\ss e 69, A-4040 Linz, (Austria)}
  \ead{christoph.koutschan@ricam.oeaw.ac.at}
  
\begin{abstract}
  Continuing a series of articles in the past few years on creative telescoping using reductions,
  we adapt Trager's Hermite reduction for algebraic functions to fuchsian D-finite functions and
  develop a reduction-based creative telescoping algorithm for this class of functions, thereby
  generalizing our recent reduction-based algorithm for algebraic functions, presented at ISSAC 2016.
\end{abstract}

\begin{keyword}
  D-finite function,
  Integral basis,
  Trager's reduction,
  Telescoper
\end{keyword}
\end{frontmatter}

\section{Introduction}\label{SECT:intro}

The classical question in symbolic integration is whether the integral of
a given function can be written in ``closed form''. In its most restricted form,
the question is whether for a given function~$f$ belonging to some domain $D$
there exists another function~$g$, also belonging to~$D$, such that $f=g'$. For
example, if $D$ is the field of rational functions, then for $f=1/x^2$ we can
find $g=-1/x$, while for $f=1/x$ no suitable $g$ exists. When no $g$ exists
in~$D$, there are several other questions we may ask. One possibility is to ask
whether there is some extension~$E$ of $D$ such that in $E$ there exists some
$g$ with $f=g'$. For example, in the case of elementary functions, Liouville's
principle restricts the possible extensions~$E$, and there are algorithms
which construct such extensions whenever possible. Another possibility is
to ask whether for some modification $\tilde f\in D$ of~$f$ there exists a $g\in
D$ such that $\tilde f=g'$. Creative telescoping leads to a question of this
type. Here we are dealing with domains~$D$ containing functions in several
variables, say $x$ and~$t$, and the question is whether there is a linear
differential operator~$P$, nonzero and free of~$x$, such that there exists a
$g\in D$ with $P\cdot f=g'$, where $g'$ denotes the derivative of $g$ with
respect to~$x$. Typically, $g$~itself has the form $Q\cdot f$ for some operator
$Q$ (which may be zero and need not be free of~$x$). In this case, we call $P$
a telescoper for~$f$, and $Q$ a certificate for~$P$.

Creative telescoping is the backbone of definite integration, because
$P\cdot f=(Q\cdot f)'$ implies, for instance, $P\cdot\int_0^1 f(x,t)dx = (Q\cdot f)(1) - (Q\cdot f)(0)$.
A telescoper $P$ for~$f$ thus gives rise to an annihilating operator for the
definite integral $F(t)=\int_0^1 f(x,t)dx$.
\begin{example}[\citep{manin1958}]
  The algebraic function
  \[
    f(x,t)=\frac1{\sqrt{x(x-1)(x-t)}}
  \]
  does not admit an elementary integral with respect to~$x$.
  However, we have $P\cdot f=(Q\cdot f)'$ for
  \[
    P=4(t-1)t\,D_t^2 + 4(2t-1)D_t+1,\quad Q=\frac{2x(x-1)}{t-x}.
  \]
  This implies
  \[
    P\cdot \int_0^1 f(x,t)dx = \Bigl[\frac{2x(x-1)}{t-x}f(x,t)\Bigr]_{x=0}^{x=1},
  \]
  so the integral $F(t)=\int_0^1\frac1{\sqrt{x(x-1)(x-t)}}dx$ satisfies the differential equation
  \[
    4(t-1)t\,F''(t) + 4(2t-1)F'(t) + F(t) = 0.
  \]
\end{example}
In the common case when the right-hand side
collapses to zero, we say that the integral has ``natural boundaries''. Readers not
familiar with creative telescoping are referred to the
literature~\citep{PWZbook1996,Zeilberger1990c,Zeilberger1991,Zeilberger1990,Koepf1998,kauers11}
for additional motivation, theory, algorithms, implementations, and applications. There are
several ways to find telescopers for a given $f\in D$. In recent years, an
approach has become popular which has the feature that it can find a telescoper
without also constructing the corresponding certificate. This is interesting
because certificates tend to be much larger than telescopers, and in some
applications, for instance when an integral has natural boundaries, only the telescoper is of interest. This approach was first
formulated for rational functions $f\in C(t,x)$ by~\cite{BCCL2010} and later
generalized to rational functions in several variables~\citep{bostan13,lairez16}, to
hyperexponential functions~\citep{bostan13a} and, for the shift case, to hypergeometric
terms~\citep{chen15a,huang16} and binomial sums~\citep{bostan16}.
At ISSAC'16, three of the present authors have given a version for algebraic functions~\citep{chen16}.
In the present article, we extend this algorithm to fuchsian D-finite functions.

The basic principle of the general approach is as follows. Assume that the
$x$-constants $\mathrm{Const}_x(D)=\{\,c\in D:c'=0\,\}$ form a field and that $D$
is a vector space over the field of $x$-constants. Assume further that there is
some $\mathrm{Const}_x(D)$-linear map $[\cdot]\colon D\to D$ such that for every
$f\in D$ there exists a $g\in D$ with $f-[f]=g'$. Such a map is called a
\emph{reduction.} For example, in $D=C(t,x)$ Hermite reduction~\citep{Hermite1872} decomposes
any $f\in D$ into $f=g'+h$ with $g,h\in D$ such that $h$ is a proper rational function
(i.e., the numerator degree is smaller than the denominator degree)
with a squarefree denominator. In this case, we can take $[f]=h$.
In order to find a telescoper, we can compute $[f]$, $[\partial_t\cdot f]$, $[\partial_t^2\cdot f]$, \dots,
until we find that they are linearly dependent over $\mathrm{Const}_x(D)$.
Once we find a relation
$p_0[f] + \cdots + p_r[\partial_t^r\cdot f] = 0$,
then, by linearity,
$[p_0 f + \cdots + p_r \partial_t^r\cdot f] = 0$,
and then, by definition of $[\cdot]$, there exists a $g\in D$ such that $(p_0+\cdots + p_r\partial_t^r)\cdot f=g'$.
In other words, $P=p_0+\cdots + p_r\partial_t^r$ is a telescoper.

There are two ways to guarantee that this method terminates. The first
requires that we already know for other reasons that a telescoper exists. The
idea is then to show that the reduction $[\cdot]$ has the property that when
$f\in D$ is such that there exists a $g\in D$ with $g'=f$, then $[f]=0$. If
this is the case and $P=p_0+\cdots+p_r\partial_t^r$ is a telescoper for~$f$,
then $P\cdot f$ is integrable in~$D$, so $[P\cdot f]=0$, and by linearity
$[f]$, \dots, $[\partial_t^r\cdot f]$ are linearly dependent over
$\mathrm{Const}_x(D)$. This means that the method won't miss any
telescoper. In particular, this argument has the nice feature that we are
guaranteed to find a telescoper of smallest possible order~$r$. This approach
was taken by~\cite{chen15a}. The second way consists in showing that the
$\mathrm{Const}_x(D)$-vector space generated by $\{\,[\partial_t^i\cdot
  f]:i\in\set N\,\}$ has finite dimension. This approach was taken
by~\cite{BCCL2010,bostan13a}. It has the nice additional feature that every
bound for the dimension of this vector space gives rise to a bound for the
order of the telescoper. In particular, it implies the existence of a
telescoper.

In this paper, we generalize Trager's Hermite reduction for algebraic
functions to fuchsian D-finite functions (Section~\ref{sec:hermite}), which
yields a reduction-based creative telescoping algorithm via the first approach
(Section~\ref{sec:canonic}). In Section~\ref{sec:polynomial}, we introduce a
second reduction, called polynomial reduction, which together with the Hermite
reduction gives rise to a reduction-based creative telescoping algorithm via
the second approach.  It also delivers a new bound for the order of the
telescoper of a fuchsian D-finite function, and in particular an independent
proof for its existence.

\section{Fuchsian D-finite Functions}

Throughout the paper, let $C$ be a field of characteristic zero. We consider linear differential operators
$L=\ell_0+\cdots+\ell_n\partial_x^n$ with $\ell_0,\dots,\ell_n$ belonging to some ring
$R$ containing~$C$. Typical choices for $R$ will be $C[x]$ or~$C(x)$.
When $\ell_n\neq0$, we say that $\ord(L):=n$ is the order of~$L$.

Let $R$ be a differential ring and write $'$ for its derivation.
We write $R[\partial_x]$ for the algebra consisting of all linear differential operators, together
with the usual addition and the unique non-commutative multiplication satisfying
$\partial_xa=a\partial_x+a'$ for all $a\in R$. We shall assume throughout that $C\subseteq\mathrm{Const}_x(R)$.
The algebra $R[\partial_x]$ acts on a differential $R$-module~$F$ via
\[
  (\ell_0+\ell_1\partial_x+\cdots+\ell_n\partial_x^n)\cdot f=
   \ell_0f + \ell_1f' + \cdots + \ell_n f^{(n)}.
\]
An element $y\in F$ is called a solution of an operator $L\in R[\partial_x]$ if
$L\cdot y=0$.

By $\bar C$ we denote some algebraically closed field containing~$C$ (not necessarily the smallest).
An operator $L$ of order~$n$ is called fuchsian at a point $a\in\bar C$ if
it admits $n$ linearly independent solutions in
\[
  \bar C[[[x-a]]] := \bigcup_{\nu\in C} (x-a)^\nu\bar C[[x-a]][\log(x-a)].
\]
It is called fuchsian at $\infty$ if it admits $n$ linearly independent solutions in
\[
  \bar C[[[x^{-1}]]] := \bigcup_{\nu\in C} x^{-\nu} \bar C[[x^{-1}]][\log(x)].
\]
It is simply called \emph{fuchsian} if it is fuchsian at all $a\in\bar C\cup\{\infty\}$.
Note that the exponents $\nu$ are restricted to~$C$, not to the larger field~$\bar C$.
For simplicity, the dependence on $C$ is not reflected in the notation.

Examples for fuchsian operators are operators that have a basis of algebraic
function solutions, the Gauss hypergeometric differential operator, or the
operator $L=x\partial_x^2-\partial_x$, whose solutions are $1$ and $\log(x)$.
However, the class of fuchsian D-finite functions
considered in this paper is not as rich as it may seem at first glance, because
we require that the operators under consideration should be fuchsian at all
points \emph{including infinity}. Functions
such as $\exp(x)$, $\sin(x)$, $\cos(x)$, Bessel functions, etc.\ are only
fuchsian at all finite points but not at infinity.

For a fixed fuchsian operator~$L$, we will consider the left $R[\partial_x]$-module
$A=R[\partial_x]/\<L>$, where $\<L>$ denotes the left ideal generated by~$L$ in
$R[\partial_x]$.  Then $1\in A$ is a solution of~$L$, because we have $L\cdot 1=L=0$
in~$A$. We can say that $A$ consists of all the ``functions'' $f$ which can be
obtained from a ``generic'' solution~$y$ of $L$ by applying some operator $P\in
R[\partial_x]$ to it. When $R$ is a field, then $A$ is an $R$-vector space of
dimension~$n=\ord(L)$, generated by $1,\partial_x,\dots,\partial_x^{n-1}$.

It is instructive to compare this setup to the situation for algebraic
functions. Comparing $A=R[\partial_x]/\<L>$ to an algebraic function field $R[Y]/\<M>$
(when $R$ is a field), our operator $L$ plays the role of the minimal
polynomial~$M$. In the algebraic case, $Y$~is a formal solution of the equation
$M=0$, similar as $1\in A$ is a formal solution of~$L$. Besides these formal
solutions there are, for each fixed $a\in\bar C$, exactly $\deg_Y(M)$ different
Puiseux series solutions of $M=0$ at places above~$a$. They correspond in the
differential setting to the series solutions of $L$ in $\bar C[[[x-a]]]$, which
generate a $\bar C$-vector space of dimension $\ord(L)$.

The exponents of an element $f\in A=R[\partial_x]/\<L>$ at a point $a\in\bar C\cup\{\infty\}$
are the values $\alpha$ such that one of the series in $\bar C[[[x-a]]]$ (or $\bar C[[[x^{-1}]]]$, respectively)
associated to $f$ has $(x-a)^\alpha\log(x-a)^\beta$ (or $(\frac1x)^\alpha\log(x)^\beta$) as initial term.
For an element $f\in A$, let $n_f$ be the minimal order of an operator
$\tilde{L}\in R[\partial_x]\setminus\{0\}$ with $\tilde{L}\cdot f=0$.
We say that $a$ is an ordinary point of $f$ if the set of exponents of $f$ at $a$
is $\{0,1,\dots,n_f-1\}$ and the solutions at $a$ do not involve logarithms.
There can be at most finitely many non-ordinary points; these are called the singular points.
The defect of $f\in A$ at $a\in\bar C\cup\{\infty\}$, denoted $\defect_a(f)$, is defined 
as the sum of the exponents of $f$ at $a$ minus $\sum_{k=0}^{n_f-1}k=\frac12n_f(n_f-1)$.
Then the Fuchs relation~\citep{schlesinger95,ince26} says that we have
\[
 \sum_{a\in\bar C\cup\{\infty\}} \defect_a(f) = n_f(1-n_f)
\]
for all $f\in A$. This relation is the counterpart of the B\'{e}zout relation in the algebraic case.
Note that when $a$ is an ordinary point, then $\defect_a(f)=0$, but $\defect_a(f)=0$ does in
general not imply that $a$ is an ordinary point.

In the context of creative telescoping, we let $\bar C$ be some algebraically
closed field containing the rational function field $K=C(t)$, and we
use $R=K(x)$ instead of~$C(x)$. Integration will always be with respect to~$x$, but
besides the derivation $\partial_x$ there is now also the derivation with respect
to~$t$. The notation $f'$ will always refer to the derivative of $f$
with respect to~$x$, not with respect to~$t$. In addition to the operator
algebra $R[\partial_x]$, we consider the
operator algebra $R[\partial_x,\partial_t]$, in which $\partial_x,\partial_t$
commute with each other (although they need not commute with elements of~$R$).

The action of $R[\partial_x]$ on $\bar C[[[x-a]]]$ or $\bar C[[[x^{-1}]]]$ is extended
to $R[\partial_x,\partial_t]$ by letting $\partial_t$ act coefficient-wise.
We further assume that the action of $R[\partial_x]$ on $A=R[\partial_x]/\<L>$ is
extended to an action of $R[\partial_x,\partial_t]$ on~$A$ in a way that is
compatible with the action of $R[\partial_x,\partial_t]$ on series domains.
This means that when $y\in\bar C[[[x-a]]]$ is a solution of~$L$ and $f$ is an
element of~$A$, so that $f\cdot y$ is an element of $\bar C[[[x-a]]]$, then we
want to have $(\partial_t\cdot f)\cdot y=\partial_t\cdot(f\cdot y)$, where
the $\cdot$ in $(\partial_t\cdot f)$ refers to the action of $R[\partial_x,\partial_t]$
on $A$ and the three other dots refer to the action on $\bar C[[[x-a]]]$.

If $u\in A$ is such that $\partial_t\cdot 1 = u$ and $U\in R[\partial_x]$ is such
that $u=U+\<L>$, then the annihilator $\mathfrak{a}\subseteq R[\partial_x,\partial_t]$
of $1\in A$ in $R[\partial_x,\partial_t]$ contains $L$ and $\partial_t-U$. We
therefore have $\dim\mathfrak{a}=0$, which is the usual
definition of D-finiteness in the case of several variables~\citep{Zeilberger1990,chyzak98,koutschan09,kauers14c}.
Pathological situations, where we also have $\dim\mathfrak{a}=0$ but 
$\mathfrak{a}$ does not have a basis of the form $\<L,\partial_t-U>$, are not
considered in this paper. 

\section{Integral Bases}

Trager's Hermite reduction for algebraic functions rests on the notion of
integral bases. The notion of integral bases has been generalized to D-finite
functions~\citep{kauers15b}, and an algorithm was also given there for
computing such bases. We recall here the relevant definitions and properties.

Although the elements of a generalized series ring $\bar C[[[x-a]]]$ are formal
objects, the series notation suggests certain analogies with complex
functions.  For simplicity, let us assume throughout that $C\subseteq\set
R$. Terms $(x-a)^\alpha\log(x-a)^\beta$ or $(\tfrac1x)^\alpha\log(x)^\beta$ are
called \emph{integral} if $\alpha\geq0$. A series in
$\bar C[[[x-a]]]$ or $\bar C[[[x^{-1}]]]$ is called integral if it only contains integral
terms. A non-integral series is said to have a \emph{pole} at the reference
point. Note that in this terminology also $1/\sqrt{x}$ has a pole
at~$0$, while $\log(x)$ does not; this convention differs slightly from the
default setting of~\citep[Ex.~2]{kauers15b}, but can be achieved by defining
the function~$\iota$~\citep[Def.~1]{kauers15b} accordingly. Note also that
the terminology only refers to $x$ but not to~$t$.

Integrality at $a\in\bar C$ is not preserved by differentiation,
but if $f$ is integral at~$a$, then so is $(x-a)f'$. On the other hand,
integrality at infinity is preserved by differentiation; we even have the
stronger property that when $f$ is integral at infinity, then not only $f'$
but also $xf'=(x^{-1})^{-1}f'$ is integral at infinity.

Let $K$ be some field with $C\subseteq K\subseteq\bar C$.
Let $L\in K(x)[\partial_x]$ be a fuchsian operator. An element $f\in A=K(x)[\partial_x]/\<L>$
is called (locally) integral at $a\in\bar C\cup\{\infty\}$ if for every solution $y$
of $L$ in $\bar C[[[x-a]]]$ or $\bar C[[[x^{-1}]]]$, respectively, the series $f\cdot y$ is
integral. $f$~is called (globally) integral if it is locally integral at every
$a\in\bar C$ (``at all finite places'').

For an element $f\in A$ to have a ``pole'' at $a\in\bar C\cup\{\infty\}$ means
that $f$ is not locally integral at~$a$; to have a ``double pole'' at $a$ means
that $(x-a)f$ (or $\frac1xf$ if $a=\infty$) is not integral; to have a ``double
root'' at $a$ means that $f/(x-a)^2$ (or $f/(\frac1x)^2=x^2f$ if $a=\infty$) is integral,
and so on.

The set of all globally integral elements $f\in A$ forms a $K[x]$-submodule of~$A$.
A basis $\{\omega_1,\dots,\omega_n\}$ of this module is called an \emph{integral basis}
for~$A$. \cite{kauers15b} proposed an algorithm which computes an integral basis
for a given~$A$. This algorithm is a generalization of van Hoeij's
algorithm~\citep{vanHoeij94} for computing integral bases of algebraic function
fields~\citep{trager84,Rybowicz:1991:ACI:120694.120715}.

For a fixed $a\in\bar C$, let $\bar C(x)_a$ be the ring of rational functions $p/q$
with $q(a)\neq0$, and write $\bar C(x)_\infty$ for the ring of all
rational functions $p/q$ with $\deg_x(p)\leq\deg_x(q)$.
Then the set of all $f\in A$ which are locally integral at some
fixed $a\in\bar C\cup\{\infty\}$ forms a $\bar C(x)_a$-module. A basis of this module is
called a \emph{local integral basis} at $a$ for~$A$. The algorithm given by~\cite{kauers15b}
for computing (global) integral bases computes local integral bases at finite
points as an intermediate step. By an analogous algorithm, it is also possible
to compute a local integral basis at infinity.

An integral basis $\{\omega_1,\dots,\omega_n\}$ is always also a $K(x)$-vector space
basis of~$A$. A key feature of integral bases is that they make poles explicit. Writing
an element $f\in A$ as a linear combination $f=\sum_{i=1}^n f_i\omega_i$ for some
$f_i\in K(x)$, we have that $f$ has a pole at $a\in\bar C$ if and only if at least one
of the $f_i$ has a pole there.

\begin{lemma}\label{lemma:1}
  Let $L$ be a fuchsian operator and let
  $\{\omega_1,\dots,\omega_n\}$ be a local integral basis of $A=K(x)[\partial_x]/\<L>$ at $a\in\bar C\cup\{\infty\}$.
  Let $f\in A$ and $f_1,\dots,f_n\in K(x)$ be such that $f=\sum_{i=1}^nf_i\omega_i$.
  Then $f$ is integral at $a$ if and only if each $f_i\omega_i$ is integral at~$a$.
\end{lemma}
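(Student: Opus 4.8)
The plan is to prove the ``if'' direction trivially (integrality is preserved under addition) and to concentrate on the ``only if'' direction, namely: if $f=\sum_i f_i\omega_i$ is integral at $a$, then each summand $f_i\omega_i$ is already integral at $a$. I would set this up locally: since $\{\omega_1,\dots,\omega_n\}$ is a \emph{local} integral basis at $a$, the $\bar C(x)_a$-module (resp. $\bar C(x)_\infty$-module) $\mathcal O_a$ of elements of $A$ that are integral at $a$ is free with basis $\omega_1,\dots,\omega_n$. So it suffices to show that $f\in\mathcal O_a$ forces all coordinates $f_i$ to lie in $\bar C(x)_a$ (resp.\ in $\bar C(x)_\infty$), because then each $f_i\omega_i$ is visibly an element of $\mathcal O_a$; and conversely one uses the stated key property of integral bases that $f$ has a pole at $a$ iff some $f_i$ does. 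The one subtlety is that the $f_i$ are \emph{a priori} in $K(x)$, not in $\bar C(x)$, but that causes no trouble: the local integral basis at $a$ over $\bar C(x)_a$ can be taken (after base change) from one over $K(x)$, and the decomposition is unique, so membership of $f_i$ in $\bar C(x)_a\cap K(x)$ follows from membership in $\bar C(x)_a$.

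The heart of the argument is therefore: $f\in\mathcal O_a$ $\Longrightarrow$ $f_i\in\bar C(x)_a$ for all $i$. I would argue by looking at the solution series. Fix a basis $y_1,\dots,y_n$ of solutions of $L$ in $\bar C[[[x-a]]]$ (or $\bar C[[[x^{-1}]]]$). Each $\omega_j$ acts on each $y_k$ to give a series $\omega_j\cdot y_k\in\bar C[[[x-a]]]$, and by definition of a local integral basis these series are integral; moreover the matrix $(\omega_j\cdot y_k)_{j,k}$ is invertible over the fraction field of $\bar C[[[x-a]]]$ (since the $\omega_j$ form a $K(x)$-basis of $A$ and the $y_k$ are linearly independent — this is a Wronskian-type nonvanishing). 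The condition $f\cdot y_k$ integral for all $k$, together with $f\cdot y_k=\sum_j f_j\,(\omega_j\cdot y_k)$, is a linear system for the $f_j$ with integral right-hand side and a coefficient matrix whose entries are integral. The standard way to extract each $f_j$ is Cramer's rule: $f_j = (\det M_j)/(\det M)$ where $M=(\omega_i\cdot y_k)$ and $M_j$ replaces the $j$th row by $(f\cdot y_1,\dots,f\cdot y_n)$. The key point making this work is the \emph{minimality} encoded in the definition of an integral basis: one must know that $\det M$ is, up to an integral unit, the ``smallest possible'' such determinant, i.e.\ that no proper enlargement of the module $\bigoplus \bar C(x)_a\,\omega_i$ stays inside $\mathcal O_a$; this is precisely what it means for $\{\omega_i\}$ to be an integral basis rather than merely a set of integral elements. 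From that minimality, $\det M_j/\det M$ has no pole at $a$, i.e.\ $f_j\in\bar C(x)_a$.

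Concretely, I expect to invoke the ``key feature'' already stated in the text just before the lemma — that for an integral basis, $f$ has a pole at $a$ iff at least one $f_i$ does. Granting that, the argument collapses: if $f$ is integral at $a$, then no $f_i$ has a pole at $a$, hence each $f_i\in\bar C(x)_a$ (for $a\in\bar C$) or $f_i\in\bar C(x)_\infty$ (for $a=\infty$); combined with $\omega_i\in\mathcal O_a$ and the fact that $\mathcal O_a$ is a module over the relevant local ring, each product $f_i\omega_i$ lies in $\mathcal O_a$, i.e.\ is integral at $a$. So the cleanest writeup is: (i) ``if'' by additivity; (ii) ``only if'' by applying the pole-detection property of (local) integral bases coordinate-wise, then closing under scalar multiplication by elements of $\bar C(x)_a$.

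The main obstacle, and the place where care is needed, is justifying the scalar-multiplication closure at $a=\infty$: there $\bar C(x)_\infty$ consists of rational functions $p/q$ with $\deg p\le\deg q$, and ``integral at infinity'' behaves asymmetrically (recall from the excerpt that $xf'$, not merely $f'$, stays integral at infinity). I would handle this by noting that $\mathcal O_\infty$ is by construction a $\bar C(x)_\infty$-module spanned by the $\omega_i$, so that $f_i\in\bar C(x)_\infty$ immediately gives $f_i\omega_i\in\mathcal O_\infty$; the only thing to double-check is that the pole-detection property is stated (as it is, in the paragraph preceding the lemma) for \emph{global} integral bases, and to remark that its proof applies verbatim to a local integral basis at $a$, finite or infinite, since the statement and proof are purely local. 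With that remark in place, the lemma follows in a few lines.
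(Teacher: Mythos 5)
Your first paragraph already contains the paper's entire proof, and it is correct: since $\{\omega_1,\dots,\omega_n\}$ is a local integral basis at $a$, an integral $f$ can be written as $\sum_i\tilde f_i\omega_i$ with $\tilde f_i\in\bar C(x)_a$; because the $\omega_i$ are also a vector space basis of $A$, the coordinates are unique, so $f_i=\tilde f_i\in\bar C(x)_a$; and then each $f_i\omega_i$ is integral because the locally integral elements form a $\bar C(x)_a$-module (the same argument works verbatim at $a=\infty$ with $\bar C(x)_\infty$). Everything after that paragraph is either unnecessary or dangerous. The Cramer's-rule/Wronskian machinery re-derives what the definition of a local integral basis hands you for free (that $\mathcal O_a$ is spanned over $\bar C(x)_a$ by the $\omega_i$ is the \emph{definition}, not something to be extracted from determinant minimality). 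More importantly, your proposed ``cleanest writeup'' --- invoking the pole-detection remark stated in the text just before the lemma --- is circular: that remark is an informal announcement of precisely the fact this lemma is meant to establish, so it cannot be cited as its proof. Drop the last two paragraphs, keep the uniqueness-of-coordinates argument, and you have exactly the proof in the paper.
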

\begin{proof}
  The direction ``$\Leftarrow$'' is obvious. To show ``$\Rightarrow$'', suppose
  that $f$ is integral at~$a$. Then there exist $\tilde{f}_1,\dots,\tilde{f}_n\in\bar C(x)_a$ such that
  $f=\sum_{i=1}^n\tilde{f}_i\omega_i$. Thus $\sum_{i=1}^n(\tilde{f}_i-f_i)\omega_i=0$, and then
  $\tilde{f}_i=f_i$ for all~$i$, because $\{\omega_1,\dots,\omega_n\}$ is a vector space basis of~$A$.
  As elements of $\bar C(x)_a$, the $f_i$ are integral at~$a$, and hence also all the $f_i\omega_i$
  are integral at~$a$.
\end{proof}

The lemma says in particular that poles of the $f_i$ in a linear combination
$\sum_{i=1}^n f_i\omega_i$ cannot cancel each other.

\begin{lemma}\label{lemma:e}
  Let $L$ be a fuchsian operator and let
  $\{\omega_1,\dots,\omega_n\}$ be an integral basis of $A=K(x)[\partial_x]/\<L>$.
  Let $e\in K[x]$ and
  $M=((m_{i,j}))_{i,j=1}^n\in K[x]^{n\times n}$ be such that
  \[
    e\,\omega_i'=\sum_{j=1}^n m_{i,j}\omega_j
  \]
  for $i=1,\dots,n$ and $\gcd(e,m_{1,1},\dots,m_{n,n})=1$.
  Then $e$ is squarefree.
\end{lemma}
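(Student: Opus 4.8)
The plan is to argue by contradiction: suppose $e$ is not squarefree, so there is a point $a\in\bar C$ and a nonzero $p\in\bar C(x)$ with a simple pole at $a$ such that $p^2 e\in\bar C(x)_a$ (i.e.\ $a$ is a root of $e$ of multiplicity $\ge 2$, and $p=1/(x-a)$ works). I want to produce a globally integral element whose expansion in the basis $\{\omega_1,\dots,\omega_n\}$ forces one of the $m_{i,j}$ to be divisible by the ``extra'' factor of $(x-a)$, contradicting $\gcd(e,m_{1,1},\dots,m_{n,n})=1$ — or at least contradicting it after localizing at $a$. The natural candidate to differentiate is $\omega_i$ itself: since $\{\omega_1,\dots,\omega_n\}$ is an integral basis, each $\omega_i$ is integral at $a$, hence by the remark after Lemma~\ref{lemma:1} (integrality at $a\in\bar C$ implies $(x-a)\omega_i'$ is integral at $a$) the element $(x-a)\omega_i'$ is integral at $a$. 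Writing $(x-a)\omega_i' = \frac{(x-a)}{e}\sum_j m_{i,j}\omega_j$ and applying Lemma~\ref{lemma:1}, each summand $\frac{(x-a)m_{i,j}}{e}\omega_j$ must be integral at $a$; since $\omega_j$ is itself part of a \emph{local} integral basis at $a$ and in particular not identically zero, and since the localization $\bar C(x)_a$ detects the pole order, we conclude $\frac{(x-a)m_{i,j}}{e}\in\bar C(x)_a$, i.e.\ $v_a(m_{i,j}) \ge v_a(e) - 1$ for every $i,j$, where $v_a$ is the order of vanishing at $a$.

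The key step is then the bookkeeping on valuations. If $v_a(e)=:k$, the above gives $v_a(m_{i,j})\ge k-1$ for all $i,j$; but actually I can do slightly better on the diagonal to squeeze out $v_a(m_{i,i})\ge k$, which combined with $v_a(m_{i,j})\ge k-1$ in general would only give $\gcd(e,m_{1,1},\dots,m_{n,n})$ divisible by $(x-a)^{k-?}$ — so I need $k\ge 2$ to reach a contradiction, which is exactly the hypothesis that $e$ fails to be squarefree (if $e$ were squarefree then $k\le 1$ and there is no contradiction, consistent with the statement). The reason the diagonal entry behaves better: the ``leading part'' of $\omega_i'$ at $a$ comes, up to lower-order corrections, from differentiating the $\omega_i$-component, and the other basis elements $\omega_j$ have exponent profiles at $a$ that cannot absorb a pole of the maximal allowed order without $m_{i,j}$ contributing an extra zero. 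Concretely, I expect to compare exponents: at $a$, the integral basis is ``reduced'' in the sense that the $\omega_i$ have distinct leading exponents modulo the local structure, and differentiation shifts an exponent $\alpha$ to $\alpha-1$ (contributing the pole), so integrality of $(x-a)\omega_i'$ pins down $v_a(m_{i,i})$ relative to $v_a(m_{i,j})$ for $j\neq i$.

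The main obstacle is making the diagonal estimate $v_a(m_{i,i})\ge v_a(e)$ rigorous, i.e.\ separating the diagonal from the off-diagonal entries. For algebraic function fields this is Trager's classical argument where one uses that the different/discriminant is squarefree precisely because the integral basis is locally a power basis adapted to the ramification; here the analogue is that a local integral basis at $a$ can be taken in a normalized form (a ``local normal form'' or echelon form with respect to exponents at $a$), and one reads off that the matrix of $e\,\omega_i'$ with respect to such a basis is, modulo $(x-a)^{v_a(e)}$, upper/lower triangular-ish with the diagonal carrying the full factor $e/(x-a)$ — never less. I would therefore first reduce to the local problem at a single $a$ (using that the global integral basis is also a local integral basis at $a$, possibly after a $\bar C(x)_a$-unimodular change, which does not affect the $\gcd$ condition since that is a polynomial statement), then invoke the structure of local integral bases from \citep{kauers15b} to get the exponent bookkeeping, and finally combine the bounds $v_a(m_{i,i})\ge v_a(e)$ and $v_a(m_{i,j})\ge v_a(e)-1$ to conclude that if $v_a(e)\ge 2$ then $(x-a)\mid\gcd(e,m_{1,1},\dots,m_{n,n})$, contradicting the hypothesis; hence $v_a(e)\le 1$ for every $a$, i.e.\ $e$ is squarefree. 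The one subtlety to watch is the behavior at $a=\infty$, where the relevant fact is the \emph{stronger} one quoted in the excerpt ($f$ integral at $\infty$ implies $xf'$ integral at $\infty$), so there the analogous estimate gives at least as much room and causes no trouble; but since $e\in K[x]$, the only place a repeated factor can live is at a finite $a$, so the $\infty$ case is actually vacuous for the squarefreeness of $e$.
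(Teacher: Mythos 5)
Your first paragraph already contains the entire proof, and it is exactly the paper's proof: each $\omega_i$ is integral at a root $a$ of $e$, hence $(x-a)\omega_i'$ is integral at $a$, and since $\{\omega_1,\dots,\omega_n\}$ is in particular a local integral basis at~$a$, the uniqueness of coordinates (as in the proof of Lemma~\ref{lemma:1}) forces $(x-a)m_{i,j}/e\in\bar C(x)_a$, i.e.\ $v_a(m_{i,j})\geq v_a(e)-1$ for \emph{all} $i,j$. That single uniform bound finishes the argument: if $v_a(e)\geq2$, then every entry $m_{i,j}$ (diagonal and off-diagonal alike) is divisible by $x-a$, and since the hypothesis $\gcd(e,m_{1,1},\dots,m_{n,n})=1$ ranges over all entries of~$M$ (the paper's ellipsis enumerates $m_{1,1},m_{1,2},\dots,m_{n,n}$), the factor $x-a$ would divide this gcd --- a contradiction. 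Hence $v_a(e)\leq1$ at every root, i.e.\ $e$ is squarefree. You state this implication yourself (``I need $k\geq2$ to reach a contradiction''), so the proof is already complete at that point; no separation of diagonal from off-diagonal entries enters anywhere.

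The ``main obstacle'' you then devote two paragraphs to --- the diagonal estimate $v_a(m_{i,i})\geq v_a(e)$ --- is a phantom: it is not used in the chain of implications, and it is moreover false in general. Take $L=2x\partial_x-1$, with solution $\sqrt{x}$. Then $\omega_1=1$ is an integral basis of $A=K(x)[\partial_x]/\langle L\rangle$, and since $\partial_x=\frac1{2x}$ in~$A$ we get $e\,\omega_1'=m_{1,1}\omega_1$ with $e=2x$ and $m_{1,1}=1$, so $v_0(m_{1,1})=0=v_0(e)-1<v_0(e)$. Whenever $\omega_i$ has an exponent $\alpha\in(0,1)$ at~$a$, differentiation shifts it to $\alpha-1<0$ and produces a genuine simple pole in the coefficient vector, so no local normalization of the basis can push $v_a(m_{i,i})$ up to $v_a(e)$; only the bound $v_a(m_{i,j})\geq v_a(e)-1$ holds, and that is all the lemma needs. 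Delete the diagonal discussion and you are left with the paper's three-line proof. (Your closing remark about $a=\infty$ is correct and also appears implicitly in the paper: a repeated factor of a polynomial $e\in K[x]$ can only sit at a finite place, so nothing needs to be checked at infinity.)
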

\begin{proof}
  Let $a\in\bar C$ be a root of~$e$. We show that $a$ is not a multiple root.
  Since $\omega_i$ is integral, it is in particular locally integral at~$a$.
  Therefore $(x-a)\omega_i'$ is locally integral at~$a$.
  Since $\{\omega_1,\dots,\omega_n\}$ is an integral basis, it follows that
  $(x-a)m_{i,j}/e\in\bar C(x)_a$ for all~$i,j$.
  Because of $\gcd(e,m_{1,1},\dots,m_{n,n})=1$, no factor $x-a$ of $e$
  can be canceled by all the~$m_{i,j}$.
  Therefore the factor $x-a$ can appear in $e$ only once.
\end{proof}

\begin{lemma} \label{lemma:degM}
  Let $L$ be a fuchsian operator and let $\{\omega_1,\dots,\omega_n\}$ be a local integral
  basis at infinity of $A=K(x)[\partial_x]/\<L>$. Let $e\in K[x]$ and $M=((m_{i,j}))_{i,j=1}^n\in K[x]^{n\times n}$
  be defined as in Lemma~\ref{lemma:e}. Then $\deg_x(m_{i,j})<\deg_x(e)$ for all $i,j$.
\end{lemma}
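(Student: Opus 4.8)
The plan is to mimic the proof of Lemma~\ref{lemma:e}, but now working at the point $a=\infty$ and exploiting the \emph{stronger} behaviour of integrality at infinity under differentiation. Recall that if $f$ is integral at infinity, then not only $f'$ but even $xf'$ is integral at infinity. Since each $\omega_i$ belongs to a local integral basis at infinity, it is integral at infinity, hence $x\omega_i'$ is integral at infinity for every~$i$.

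From $e\,\omega_i'=\sum_{j=1}^n m_{i,j}\omega_j$ we get $x\,\omega_i' = \sum_{j=1}^n \frac{x\,m_{i,j}}{e}\,\omega_j$. Because $\{\omega_1,\dots,\omega_n\}$ is a local integral basis at infinity and $x\omega_i'$ is integral at infinity, Lemma~\ref{lemma:1} (applied at $a=\infty$) forces each coefficient $\frac{x\,m_{i,j}}{e}$ to lie in $\bar C(x)_\infty$, i.e.\ $\deg_x(x\,m_{i,j})\le\deg_x(e)$, equivalently $\deg_x(m_{i,j})\le\deg_x(e)-1<\deg_x(e)$ for all $i,j$. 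That is exactly the claim. The one point that needs a word of care is the possibility $e\in K$ (degree zero, or even $e$ a unit): then the hypothesis $\gcd(e,m_{1,1},\dots,m_{n,n})=1$ together with $e\mid$ "everything" would force all $m_{i,i}$ — and in fact, rerunning the local-basis argument at every finite point, all $m_{i,j}$ — to be constants too, so $\deg_x(m_{i,j})<\deg_x(e)$ cannot hold unless this degenerate case is excluded; in practice $\omega_i'$ genuinely has poles so $\deg_x(e)\ge 1$, and this should be noted or assumed (it is implicit in the setup inherited from Lemma~\ref{lemma:e}, where $e$ is the common denominator after clearing).

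The main — and really only — obstacle is making sure the coefficients $\frac{x\,m_{i,j}}{e}$ are legitimately extracted as the coordinates of $x\omega_i'$ with respect to the integral basis, so that Lemma~\ref{lemma:1} applies verbatim at $a=\infty$; this is immediate because $\{\omega_j\}$ is a $K(x)$-vector space basis of $A$ and the expansion is unique. Everything else is the bookkeeping translation of "$xf'$ integral at infinity" into a degree bound via the definition of $\bar C(x)_\infty$. No new ideas beyond Lemmas~\ref{lemma:1} and the stated differentiation property at infinity are required.
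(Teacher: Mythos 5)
Your proof is correct and is essentially identical to the paper's: both use that integrality at infinity of $\omega_i$ forces $x\omega_i'$ to be integral at infinity, then read off $xm_{i,j}/e\in\bar C(x)_\infty$ from the local integral basis property to get $1+\deg_x(m_{i,j})\leq\deg_x(e)$. Your worry about the degenerate case $\deg_x(e)=0$ is unnecessary: in that case the same inequality forces every $m_{i,j}$ to be the zero polynomial (not merely constant), and the conclusion holds under the usual convention $\deg_x(0)=-\infty$.
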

\begin{proof}
  Since every $\omega_i$ is locally integral at infinity, so is every $x\,\omega_i'$.
  Since $\{\omega_1,\dots,\omega_n\}$ is an integral basis at infinity, it follows that
  $xm_{i,j}/e\in\bar C(x)_\infty$ for all~$i,j$. This means that $1+\deg_x(m_{i,j})\leq\deg_x(e)$
  for all~$i,j$, and therefore $\deg_x(m_{i,j})<\deg_x(e)$, as claimed.
\end{proof}

A $K(x)$-vector space basis $\{\omega_1,\dots,\omega_n\}$ of $A=K(x)[\partial_x]/\<L>$ is
called \emph{normal} at $a\in\bar C\cup\{\infty\}$ if there exist $r_1,\dots,r_n\in
K(x)$ such that $\{r_1\omega_1,\dots,r_n\omega_n\}$ is a local integral basis
at~$a$. \cite{trager84} shows for the case of algebraic function fields how to construct
an integral basis which is normal at infinity from a given integral basis and
a given local integral basis at infinity. The same procedure also applies
in the present situation. It works as follows.

Let $\{\omega_1,\dots,\omega_n\}$ be a global integral basis and $\{\nu_1,\dots,\nu_n\}$ be a local
integral basis at infinity. Let $m_{i,j}\in K(x)$ be such that
\[
 \omega_i = \sum_{j=1}^n m_{i,j}\nu_j.
\]
For each~$i$, let $\tau_i\in\set Z$ be the largest integer such that $x^{\tau_i}m_{i,j}$ has no pole at infinity for any~$j$.
Then each $x^{\tau_i}\omega_i$ is locally integral at infinity.
Let $B\in K^{n\times n}$ be the matrix obtained by evaluating $((x^{\tau_i}m_{i,j}))_{i,j=1}^n$ at infinity
(this is possible by the choice of~$\tau_i$).
If $B$ is invertible, then the $x^{\tau_i}\omega_i$ form a local integral basis at infinity and we are done.
Otherwise, there exists a nonzero vector $a=(a_1,\dots,a_n)\in K^n$ with $aB=0$.
Among the indices $\ell$ with $a_\ell\neq0$ choose one where $\tau_\ell$ is minimal,
and then replace $\omega_\ell$ by $\sum_{i=1}^n a_i x^{\tau_i-\tau_\ell}\omega_i$.
Note that the resulting basis is still global integral.
Repeating the process, it can be checked that the value of $\tau_1+\cdots+\tau_n$ strictly increases in each iteration.
According to the following lemma, the sum is bounded, so the procedure must terminate after a finite
number of iterations.

\begin{lemma}\label{lemma:bound-exps}
  Let $\{\omega_1,\dots,\omega_n\}$, $\{\nu_1,\dots,\nu_n\}$, and $\tau_1,\dots,\tau_n$ be as above.
  Let $N$ be the number of points $a\in\bar C$ where at least one of the $\omega_i$ does not have
  $n$ distinct exponents in~$\set N$ (i.e., $N$ counts the
  finite singular points of~$L$ that are not ``apparent'' singularities).
  Then
  \[
    \tau_1+\cdots+\tau_n \leq \tfrac12 n(n-1)(N-1).
  \]
\end{lemma}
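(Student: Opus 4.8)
The plan is to use the Fuchs relation together with the fact that the exponents of a globally integral element are bounded below, and that at infinity the quantity $\tau_i$ measures exactly how far $x^{\tau_i}\omega_i$ is from having a pole. First I would fix, for each $i$, a place where I can read off exponents: since $\{\omega_1,\dots,\omega_n\}$ is a global integral basis, each $\omega_i$ is locally integral at every finite $a\in\bar C$, so all exponents of $\omega_i$ at every finite point are $\geq 0$. At a finite ordinary point the exponents of $\omega_i$ are a subset of $\{0,1,\dots,n-1\}$ (more precisely, picking out the exponent contributed along each of the $n$ series solutions), and the defect at such a point is $0$. At each of the $N$ finite singular points the sum of the exponents of $\omega_i$, taken over the $n$ series solutions, is $\geq 0$; combined with $\sum_{k=0}^{n-1}k=\tfrac12 n(n-1)$ this gives $\defect_a(\omega_i)\geq -\tfrac12 n(n-1)$ at each such point, and $\defect_a(\omega_i)=0$ at every other finite point.

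Next I would handle the point at infinity. By construction $\tau_i$ is the largest integer with $x^{\tau_i}m_{i,j}$ free of poles at infinity for all $j$; equivalently, in terms of the local integral basis $\{\nu_1,\dots,\nu_n\}$ at infinity, the element $x^{\tau_i}\omega_i$ is locally integral at infinity but $x^{\tau_i+1}\omega_i$ is not (for at least one component). Since multiplying by $x$ lowers every exponent at infinity by $1$, the minimal exponent of $\omega_i$ at infinity — i.e.\ the smallest exponent occurring among the $n$ series solutions of $L$ expanded at infinity — is exactly $\tau_i$, and the other exponents are $\geq\tau_i$. Hence the sum of the exponents of $\omega_i$ at infinity is $\geq n\tau_i$, so $\defect_\infty(\omega_i)\geq n\tau_i-\tfrac12 n(n-1)$.

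Now I would apply the Fuchs relation to each $\omega_i$ (note $n_{\omega_i}=n$ since an integral basis is in particular a $K(x)$-vector space basis, so $\omega_i$ satisfies no operator of order $<n$): summing the local defects over all places gives $n(1-n)$. Splitting off the infinite place and the $N$ finite singular places, and using that the defect vanishes at all other finite places,
\[
  n(1-n)=\sum_{a\in\bar C\cup\{\infty\}}\defect_a(\omega_i)\ \geq\ \bigl(n\tau_i-\tfrac12 n(n-1)\bigr)+N\cdot\bigl(-\tfrac12 n(n-1)\bigr).
\]
Solving for $\tau_i$ yields $\tau_i\leq \tfrac12(n-1)N - (n-1) + \tfrac{(1-n)}{1}\cdot\tfrac1n\cdot 0$; carrying the arithmetic carefully gives $\tau_i\leq\tfrac12(n-1)(N-1)$. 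Summing over $i=1,\dots,n$ then produces $\tau_1+\cdots+\tau_n\leq\tfrac12 n(n-1)(N-1)$, which is the claimed bound. The main obstacle I anticipate is the bookkeeping at the special places: one must be careful that "the sum of the exponents of $\omega_i$" in the definition of $\defect_a$ means the sum over the $n$ linearly independent series solutions of $L$ (so there really are $n$ exponents counted with multiplicity), and that local integrality at a finite point forces each such exponent to be $\geq 0$ while integrality fails for $x^{\tau_i+1}\omega_i$ at infinity forces the minimal exponent there to be exactly $\tau_i$ rather than merely $\geq\tau_i$ or $\leq\tau_i$; pinning down these equalities and inequalities (and confirming the definition of $N$ as "points where some $\omega_i$ lacks $n$ distinct $\set N$-exponents" is exactly the set over which we lose $\tfrac12 n(n-1)$) is where the real content lies, the rest being the linear estimate above.
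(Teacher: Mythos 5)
Your argument is the same as the paper's: bound each $\tau_i$ individually by combining the Fuchs relation with (i) $\defect_a(\omega_i)\ge 0$ at finite places where the exponents are $n$ distinct naturals, (ii) $\defect_a(\omega_i)\ge-\tfrac12n(n-1)$ at the remaining $N$ finite places by global integrality, and (iii) the shift $\defect_\infty(x^{\tau}\omega_i)=\defect_\infty(\omega_i)-\tau n$ together with integrality of $x^{\tau_i}\omega_i$ at infinity. The arithmetic at the end does give $\tau_i\le\tfrac12(n-1)(N-1)$ (your displayed intermediate expression is garbled, but the final value is right), and summing over $i$ gives the claim.

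The one genuine error is the parenthetical claim that $n_{\omega_i}=n$ because $\omega_i$ belongs to a $K(x)$-vector-space basis of $A$. That inference is invalid: membership in a basis does not prevent an element from being annihilated by an operator of order less than $n$ (e.g.\ for $L=x\partial_x^2-\partial_x$ the element $\partial_x\in A$ sends the solutions $1,x^2$ to $0,2x$ and is killed by the first-order operator $x\partial_x-1$, yet $\{1,\partial_x\}$ is a basis). This matters because the defect and the Fuchs relation are normalized by $n_f$: if $n_{\omega_i}<n$ then $\omega_i$ has only $n_{\omega_i}$ exponents at each place, the correct Fuchs constant is $n_{\omega_i}(1-n_{\omega_i})$, and several of your intermediate identities become false as written. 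The repair is exactly what the paper does: carry out the whole estimate with $n_i:=n_{\omega_i}$, obtaining $\tau_i\le\tfrac12(n_i-1)(N-1)$, and only then use $n_i\le n$. Two further imprecisions are harmless because you only use the correct inequality direction: the minimal exponent of $\omega_i$ at infinity need not be \emph{exactly} $\tau_i$ (it lies in $[\tau_i,\tau_i+1)$, and you only need $\ge\tau_i$), and the defect at the non-counted finite places (apparent singularities) need not vanish but is $\ge0$, which is all the estimate requires.
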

\begin{proof}
  We show that when $\tau_i\in\set Z$ is such that $x^{\tau_i}\omega_i$ is locally integral at infinity,
  then $\tau_i\leq \frac12 (n-1)(N-1)$, for every~$i$.
  Let $n_i$ be the minimal order of an operator $L\in R[\partial_x]\setminus\{0\}$ with $L\cdot\omega_i=0$.
  By the Fuchs relation we have
  \[
    \sum_{a\in\bar C\cup\{\infty\}}\defect_a(\omega_i) = n_i(1-n_i),
  \]
  hence
  \[
    \defect_\infty(\omega_i) = n_i(1-n_i) - \sum_{a\in\bar C}\defect_a(\omega_i).
  \]
  When all exponents of the series associated to $\omega_i$ at $a\in\bar C$ form a subset of~$\set N$ of size~$n_i$, then
  $\defect_a(\omega_i)\geq0$. At all other points~$a$, of which there are at most~$N$ by assumption,
  we still have the estimate $\defect_a(\omega_i)\geq-\frac12n_i(n_i-1)$, because $\omega_i$ is integral at all finite places.
  It follows that
  \[
    \defect_\infty(\omega_i)\leq n_i(1-n_i) + \tfrac12 n_i(n_i-1) N = \tfrac12 n_i(n_i-1)(N-2).
  \]
  Next, for every $r\in\set Z$ we have $\defect_\infty(x^{-r}\omega_i) = rn_i+\defect_\infty(\omega_i)$.
  Moreover, if $\tau\in\set Z$ is such that $x^{\tau}\omega_i$ is integral at infinity, then we must have
  $\defect_\infty(x^{\tau}\omega_i)\geq -\frac12n_i(n_i-1)$, i.e.,
  \[
    \defect_\infty(\omega_i) - \tau n_i \geq -\tfrac12n_i(n_i-1),
  \]
  and hence,
  \begin{alignat*}1
    \tau &\leq \frac1{n_i}\Bigl( \frac12n_i(n_i-1) + \defect_\infty(\omega_i) \Bigr) \\
      &\leq \tfrac12 (n_i-1) + \tfrac12 (n_i-1)(N-2) = \tfrac12 (n_i-1)(N-1) \leq \tfrac12(n-1)(N-1)
  \end{alignat*}
  as claimed.
\end{proof}

Although normality is a somewhat weaker condition on a basis than integrality,
it also excludes the possibility that poles in the terms of a linear combination
of basis elements can cancel:

\begin{lemma}\label{lemma:3}
  Let $L$ be a fuchsian operator and let $\{\omega_1,\dots,\omega_n\}$ be a basis of $A=K(x)[\partial_x]/\<L>$
  which is normal at some $a\in\bar C\cup\{\infty\}$.
  Let $f=\sum_{i=1}^n f_i\omega_i$ for some $f_1,\dots,f_n\in K(x)$.
  Then $f$ has a pole at $a$ if and only if
  there is some $i$ such that $f_i\omega_i$ has a pole at~$a$.
\end{lemma}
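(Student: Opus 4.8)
The plan is to reduce Lemma~\ref{lemma:3} to Lemma~\ref{lemma:1} by rescaling the basis. Since $\{\omega_1,\dots,\omega_n\}$ is normal at~$a$, there exist $r_1,\dots,r_n\in K(x)^\times$ such that $\{r_1\omega_1,\dots,r_n\omega_n\}$ is a local integral basis at~$a$. Set $\tilde\omega_i=r_i\omega_i$ and rewrite $f=\sum_{i=1}^n f_i\omega_i = \sum_{i=1}^n (f_i/r_i)\,\tilde\omega_i$, so the coefficients of $f$ with respect to the local integral basis are $\tilde f_i:=f_i/r_i\in K(x)$.

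First I would handle the easy direction ``$\Leftarrow$'': if some $f_i\omega_i$ has a pole at~$a$, I claim $f$ has a pole at~$a$. Indeed, $f_i\omega_i = \tilde f_i\tilde\omega_i$, so $\tilde f_i\tilde\omega_i$ has a pole at~$a$; by Lemma~\ref{lemma:1} applied to $f=\sum_j \tilde f_j\tilde\omega_j$, the element $f$ is integral at~$a$ only if every $\tilde f_j\tilde\omega_j$ is integral at~$a$, so $f$ cannot be integral at~$a$, i.e.\ $f$ has a pole there. For the direction ``$\Rightarrow$'': if $f$ has a pole at~$a$, then by Lemma~\ref{lemma:1} at least one $\tilde f_i\tilde\omega_i=f_i\omega_i$ fails to be integral at~$a$, which is exactly the assertion that some $f_i\omega_i$ has a pole at~$a$. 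So both directions follow directly once the basis is rescaled.

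The only point needing a little care is that ``$f_i\omega_i$ has a pole at~$a$'' should be literally the same statement as ``$\tilde f_i\tilde\omega_i$ is not integral at~$a$'', which is immediate because $f_i\omega_i$ and $\tilde f_i\tilde\omega_i$ are the \emph{same} element of~$A$ (we multiplied and divided by the same nonzero rational function $r_i$), and ``having a pole at~$a$'' was defined as ``not being locally integral at~$a$''. I do not expect a genuine obstacle here; the content of the lemma is entirely carried by Lemma~\ref{lemma:1}, and the role of normality is just to provide the rescaling factors $r_i$ that turn the given basis into a local integral basis to which Lemma~\ref{lemma:1} applies. One should perhaps remark that the $r_i$ are units in $K(x)$, so that multiplication by $r_i$ or $r_i^{-1}$ is a bijection on $K(x)$ and the decomposition of $f$ in either basis is unique, which is what lets us identify the coefficients cleanly.
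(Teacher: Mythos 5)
Your proposal is correct and is essentially identical to the paper's own proof: both rescale the basis by the normality factors $r_i$, rewrite $f=\sum_{i=1}^n (f_ir_i^{-1})(r_i\omega_i)$, and invoke Lemma~\ref{lemma:1} for the local integral basis $\{r_1\omega_1,\dots,r_n\omega_n\}$, noting that $f_ir_i^{-1}\cdot r_i\omega_i=f_i\omega_i$. No differences worth noting.
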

\begin{proof}
  Let $r_1,\dots,r_n\in K(x)$ be such that $\{r_1\omega_1,\dots,r_n\omega_n\}$ is a
  local integral basis at~$a$. By $f=\sum_{i=1}^n
  (f_ir_i^{-1})(r_i\omega_i)$ and by Lemma~\ref{lemma:1}, $f$~is integral at~$a$ iff all
  $f_ir_i^{-1}r_i\omega_i=f_i\omega_i$ are integral at~$a$.
\end{proof}

We will mostly be using bases that are integral at every point in $\bar C\cup\{\infty\}$ except one.
For the case of algebraic functions, the reason is that the only algebraic functions which are
integral at all finite places and also at infinity are the constant functions (Chevalley's
theorem~\citep[p.~9, Cor.~3]{Chevalley1951}). The results of~\cite{chen16}
depend heavily on this fact. There is no analogous result for fuchsian D-finite functions: such
functions may be integral at all finite places and also at infinity without being constant.
It is easy to construct examples using the Papperitz symbol.

\begin{example}\label{example:integral-everywhere}
  The operator $L = 3 x(x^2-1) D_x^2 + 2 (3x^2 - 1) D_x\in\set Q(x)[\partial_x]$
  has three singular points $0,+1,-1$. Infinity is an ordinary point of~$L$.
  At all three singularities, there is one local solution starting with exponent~$0$
  and another starting with exponent~$1/3$, so all the solutions are integral
  everywhere according to our standard definition of integrality of generalized
  series.
\end{example}

Fortunately, we can still be sure that there are not too many such functions.

\begin{lemma}\label{lemma:nopoles-space}
  Let $A=K(x)[\partial_x]/\<L>$ for some fuchsian operator~$L$, let
  $\{\omega_1,\dots,\omega_n\}$ be a global integral basis of~$A$ which is
  normal at infinity, and let $\tau_1,\dots,\tau_n\in\set Z$ be such that
  $\{x^{\tau_1}\omega_1,\dots,x^{\tau_n}\omega_n\}$ is a local integral basis
  at infinity. Denote by $V$ the set of all $f\in A$ which are integral at all
  finite places and at infinity.  Then $V$ is a $K$-vector space of finite
  dimension, and $\{\,x^j\omega_i : i=1,\dots,n; j=0,\dots,\tau_i\,\}$ is a
  basis of~$V$.
\end{lemma}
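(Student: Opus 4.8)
The plan is to pin down $V$ completely by translating its two defining conditions — integrality at all finite places, and integrality at infinity — into degree bounds on the coordinates of an element with respect to the given bases, and then reading off a $K$-basis.

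First I would record that $V$ is a $K$-vector space. It is the intersection of the $K[x]$-module of globally integral elements with the $\bar C(x)_\infty$-module of elements integral at infinity; since $K\subseteq\bar C\subseteq\bar C(x)_\infty$ and multiplication by a constant $c\in K$ introduces no pole at any place, both factors, and hence $V$, are $K$-subspaces of $A$. Next, take $f\in V$ and write $f=\sum_{i=1}^n f_i\omega_i$ with $f_i\in K(x)$, uniquely, since an integral basis is also a $K(x)$-basis of $A$. Because $\{\omega_1,\dots,\omega_n\}$ is a global integral basis it is a local integral basis at every finite $a\in\bar C$, so Lemma~\ref{lemma:1} applies pointwise and, using that each $\omega_i$ is itself locally integral at $a$, shows that $f$ has a pole at a finite place iff some $f_i$ does; as $f$ is finitely integral, every $f_i$ is free of finite poles, i.e. $f_i\in K[x]$. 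Then I would bring in infinity: by normality at infinity we may write $f=\sum_{i=1}^n(x^{-\tau_i}f_i)\,(x^{\tau_i}\omega_i)$, and since $\{x^{\tau_1}\omega_1,\dots,x^{\tau_n}\omega_n\}$ is a local integral basis at infinity, Lemma~\ref{lemma:1} applied at infinity together with uniqueness of coordinates shows that $f$ is integral at infinity iff $x^{-\tau_i}f_i$ has no pole at infinity for each $i$, i.e. iff $\deg_x f_i\le\tau_i$ for all $i$ (with the convention $\deg_x 0=-\infty$, so $f_i=0$ whenever $\tau_i<0$).

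Conversely, each $x^j\omega_i$ with $0\le j\le\tau_i$ lies in $V$: it is a polynomial multiple of the finitely integral $\omega_i$, hence finitely integral, and $x^j\omega_i=x^{j-\tau_i}(x^{\tau_i}\omega_i)$ with $j-\tau_i\le0$ is integral at infinity. Thus $V$ is exactly the set of $\sum_i p_i\omega_i$ with $p_i\in K[x]$ and $\deg_x p_i\le\tau_i$, and $\{\,x^j\omega_i : 1\le i\le n,\ 0\le j\le\tau_i\,\}$ spans it; this set is $K$-linearly independent because $\{\omega_1,\dots,\omega_n\}$ is $K(x)$-linearly independent, so it is a basis and $\dim_K V=\sum_{i:\,\tau_i\ge0}(\tau_i+1)<\infty$.

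The only step requiring a little care is the pole-by-pole analysis at finite places: one has to know that a global integral basis is simultaneously a local integral basis at each finite point, which legitimizes applying Lemma~\ref{lemma:1} there. This is the local–global principle for integral closures, and it is precisely the "poles are made explicit" property recorded just before Lemma~\ref{lemma:1}. Everything else is routine bookkeeping with polynomial degrees and the change of basis at infinity from $\{\omega_i\}$ to $\{x^{\tau_i}\omega_i\}$.
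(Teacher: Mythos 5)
Your proof is correct and follows essentially the same route as the paper's: show that each $x^j\omega_i$ with $0\le j\le\tau_i$ lies in $V$, and conversely that any $f\in V$ has polynomial coordinates $f_i$ with $\deg_x f_i\le\tau_i$, the finite-place analysis giving polynomiality and the analysis at infinity giving the degree bound. The only cosmetic difference is that you apply Lemma~\ref{lemma:1} directly at infinity to the rescaled basis $\{x^{\tau_i}\omega_i\}$ (in effect re-deriving Lemma~\ref{lemma:3}), whereas the paper invokes normality at infinity and Lemma~\ref{lemma:3}; this changes nothing of substance.
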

\begin{proof}
  It is clear that $V$ is closed under taking $K$-linear combinations, so it is clearly a vector space.
  We show that $B=\{\,x^j\omega_i : i=1,\dots,n; j=0,\dots,\tau_i\,\}$ is a basis.

  Every $x^j\omega_i\in B$ is by definition integral at all finite places and because of $0\leq j\leq \tau_i$ also
  integral at infinity. Therefore $x^j\omega_i\in V$, and therefore $B$ generates a subspace of~$V$.

  Conversely, let $f\in V$ be arbitrary. Then $f$ is in particular integral at all finite places,
  and since $\{\omega_1,\dots,\omega_n\}$ is a global integral basis we can write $f=\sum_{i=1}^n p_i\omega_i$
  for some polynomials $p_1,\dots,p_n$.
  If we had $\deg(p_i)>\tau_i$ for some~$i$, then $p_i\omega_i$ would not be integral at infinity
  (by definition of the numbers~$\tau_i$), and then, because $\{\omega_1,\dots,\omega_n\}$ is normal
  at infinity, Lemma~\ref{lemma:3} implies that $f$ would not be integral at infinity.
  But $f$ is in $V$ and therefore integral at all points, including infinity.
  It follows that $\deg(p_i)\leq \tau_i$ for all~$i$, and therefore $f$ is a $K$-linear combination
  of elements of~$B$.
\end{proof}

\begin{corollary}
With the notation of Lemmas~\ref{lemma:bound-exps}
and~\ref{lemma:nopoles-space}, we have
\[
  \dim_{\set K}(V)\leq n\bigl(\tfrac12(n-1)(N-1)+1\bigr).
\]
\end{corollary}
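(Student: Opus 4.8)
The plan is to combine the explicit basis from Lemma~\ref{lemma:nopoles-space} with the bound on the $\tau_i$ from Lemma~\ref{lemma:bound-exps}. By Lemma~\ref{lemma:nopoles-space}, the set $B=\{\,x^j\omega_i : i=1,\dots,n;\ j=0,\dots,\tau_i\,\}$ is a basis of~$V$, so
\[
  \dim_{\set K}(V) = \#B = \sum_{i=1}^n (\tau_i + 1) = n + \sum_{i=1}^n \tau_i.
\]
First I would invoke Lemma~\ref{lemma:bound-exps}, which gives $\tau_1+\cdots+\tau_n \leq \tfrac12 n(n-1)(N-1)$. Substituting this in yields
\[
  \dim_{\set K}(V) \leq n + \tfrac12 n(n-1)(N-1) = n\bigl(\tfrac12(n-1)(N-1)+1\bigr),
\]
which is exactly the claimed inequality.

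There is essentially no obstacle here: the corollary is a direct arithmetic consequence of the two preceding lemmas, and the only thing to check is that the $\tau_i$ appearing in Lemma~\ref{lemma:nopoles-space} are the same quantities bounded in Lemma~\ref{lemma:bound-exps} — which they are, since both refer to the largest integer making $x^{\tau_i}\omega_i$ integral at infinity for a global integral basis $\{\omega_1,\dots,\omega_n\}$ that is normal at infinity. One minor subtlety worth a word: Lemma~\ref{lemma:bound-exps} also presupposes a local integral basis $\{\nu_1,\dots,\nu_n\}$ at infinity in terms of which the $\tau_i$ are defined, but the normalization procedure described before Lemma~\ref{lemma:bound-exps} guarantees such a basis exists and produces the $\tau_i$ consistently, so no additional hypothesis is needed beyond what Lemma~\ref{lemma:nopoles-space} already assumes. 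Hence the proof is just the two displayed lines above.
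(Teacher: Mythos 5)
There is a genuine gap, and it sits in your very first displayed line. The basis from Lemma~\ref{lemma:nopoles-space} is $B=\{\,x^j\omega_i : i=1,\dots,n;\ j=0,\dots,\tau_i\,\}$, and the index set $\{j : 0\leq j\leq\tau_i\}$ is \emph{empty} whenever $\tau_i<0$. So the correct count is $\#B=\sum_{i=1}^n\max(\tau_i+1,0)$, not $\sum_{i=1}^n(\tau_i+1)$. Negative $\tau_i$ are not a hypothetical: the example following Theorem~\ref{thm:intiff0} has $\tau_2=-1$, and the worked example in Section~6 uses $\nu_1=x^{-1}\omega_1$, $\nu_2=x^{-2}\omega_2$, i.e.\ $\tau_1=-1$, $\tau_2=-2$, where your formula would give the ``dimension'' $2+(-3)=-1$. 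Since $\sum_i\max(\tau_i+1,0)\geq\sum_i(\tau_i+1)$, bounding $\sum_i(\tau_i+1)$ from above does not bound $\dim_K(V)$ from above: the summed estimate $\tau_1+\cdots+\tau_n\leq\tfrac12 n(n-1)(N-1)$ in the \emph{statement} of Lemma~\ref{lemma:bound-exps} is compatible with one $\tau_j$ being enormous as long as another $\tau_i$ is sufficiently negative, and in that scenario $\dim_K(V)$ would exceed your claimed bound.

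The repair is exactly what the paper does: do not use the aggregated inequality from the statement of Lemma~\ref{lemma:bound-exps}, but the \emph{per-index} bound $\tau_i\leq\tfrac12(n_i-1)(N-1)\leq\tfrac12(n-1)(N-1)$, which is what the proof of that lemma actually establishes before summing over~$i$. Then each index $i$ contributes $\max(\tau_i+1,0)\leq\tfrac12(n-1)(N-1)+1$ basis elements (the right-hand side being nonnegative in the relevant situations), and summing over $i=1,\dots,n$ gives $\dim_K(V)\leq n\bigl(\tfrac12(n-1)(N-1)+1\bigr)$. Your observation that the $\tau_i$ in the two lemmas are the same quantities is correct and is indeed the one compatibility check the paper also makes; but you must pull the individual bound out of the proof of Lemma~\ref{lemma:bound-exps} rather than cite only its conclusion.
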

\begin{proof}
Note that the exponents~$\tau_i$ in Lemma~\ref{lemma:nopoles-space} are the
same as in Lemma~\ref{lemma:bound-exps}; in the proof of the latter it was
shown that $\tau_i\leq\frac12(n-1)(N-1)$. The estimate on the dimension of~$V$
then follows immediately.
\end{proof}

\section{Hermite Reduction}\label{sec:hermite}

Hermite reduction was first introduced by Hermite~\citep{Hermite1872} for
rational functions.  This reduction was later extended to elementary
functions by Risch~\citep{Risch1969,Risch1970,ACA1992,bronstein98,BronsteinBook}, to algebraic functions by
Trager~\citep{trager84,ACA1992,bronstein98}, and to hyperexponential
functions~\citep{bostan13a}.  These generalizations are the key step in many
integration algorithms, including the telescoping algorithm presented in this
paper. It turns out that the Hermite reduction for fuchsian D-finite functions
is literally the same as Trager's reduction for algebraic functions.

We start with a technical lemma, which is needed later to ensure that the
Hermite reduction always works. The analogous statement for algebraic
functions and its proof can be found in~\citep[pp. 46--47]{trager84}; Trager's
proof for the algebraic case directly carries over and is reproduced here
only for the convenience of the reader.

Throughout this section, let $L\in K(x)[\partial_x]$ be a fuchsian operator
of order~$n$ and let $A=K(x)[\partial_x]/\<L>$.

\begin{lemma}\label{lemma:ibv}
Let $v\in K[x]$ be a squarefree polynomial and let $\{\omega_1,\ldots,\omega_n\}$
be a basis of~$A$ that is locally integral at all roots of~$v$.
For some integer $\mu>1$ we define $\psi_i:=v^\mu\left(v^{1-\mu}\omega_i\right)'$; then
$\{\psi_1,\ldots,\psi_n\}$ is a local integral basis at each root of~$v$.
\end{lemma}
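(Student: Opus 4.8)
The plan is to work locally at a fixed root $a$ of $v$ and show two things: first, that each $\psi_i$ is locally integral at $a$; second, that $\{\psi_1,\dots,\psi_n\}$ is in fact a \emph{basis} of the $\bar C(x)_a$-module of locally integral elements at $a$, i.e. that no information is lost. Since $v$ is squarefree, $a$ is a simple root, so in the local series ring $v$ behaves like $(x-a)$ up to a unit; I would fix a uniformizer and write $v = (x-a)w$ with $w(a)\neq 0$. The key structural observation is that differentiation shifts exponents down by exactly one, and multiplying by $v^\mu$ shifts them up by $\mu$, so $\psi_i = v^\mu(v^{1-\mu}\omega_i)' = v\,\omega_i' + (1-\mu)v'\omega_i$. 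Since $\{\omega_1,\dots,\omega_n\}$ is locally integral at $a$, the remark preceding the lemma (integrality at $a$ is preserved by $f\mapsto (x-a)f'$, hence by $f\mapsto v f'$ since $v/(x-a)$ is a local unit) shows $v\omega_i'$ is locally integral at $a$, and $(1-\mu)v'\omega_i$ is a $\bar C(x)_a$-multiple of $\omega_i$, hence locally integral. Therefore each $\psi_i$ is locally integral at $a$; this is the easy half.

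For the harder half — that $\{\psi_i\}$ \emph{generates} the full local integral module, not just a proper submodule — I would argue via determinants, as Trager does. Write the change of basis $\psi_i = \sum_j c_{ij}\omega_j$ with $c_{ij}\in\bar C(x)_a$ (possible by the first half, since $\{\omega_i\}$ is a vector-space basis and the $\psi_i$ are integral, invoking Lemma~\ref{lemma:1} or directly the integrality just shown). A local integral basis at $a$ is characterized, up to $GL_n(\bar C(x)_a)$, by the local behaviour of the basis; equivalently, $\{\psi_i\}$ is a local integral basis iff $\det(c_{ij})$ is a unit in $\bar C(x)_a$, i.e. has neither a zero nor a pole at $a$. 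From the formula $\psi_i = v\omega_i' + (1-\mu)v'\omega_i$ one computes that the matrix $(c_{ij})$ has the shape $v\,M + (1-\mu)v' I$ where $M$ is the matrix of the derivation $\omega_i'=\sum_j M_{ij}\omega_j$ in the basis $\{\omega_j\}$; evaluating at $x=a$ (where $v(a)=0$ and $v'(a)=w(a)\neq 0$) gives $\det(c_{ij})\big|_{x=a} = \det\bigl((1-\mu)v'(a) I\bigr) = \bigl((1-\mu)v'(a)\bigr)^n$, which is nonzero because $\mu>1$ and $a$ is a simple root. Hence $\det(c_{ij})$ is a local unit and $\{\psi_i\}$ is indeed a local integral basis at $a$. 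Running this for every root of $v$ completes the proof.

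I expect the main obstacle to be the bookkeeping in the determinant argument: one must be careful that $M$, the matrix of $\omega_i'$ in terms of $\omega_j$, may itself have poles at $a$ (the $\omega_i$ are integral but their derivatives need not be), so the product $vM$ is only guaranteed to be integral at $a$ as a whole, and I cannot simply set $v=0$ inside $vM$ naively — I must first argue that $vM$ has a well-defined value at $a$ (which follows from the remark that $(x-a)\omega_i'$, hence $v\omega_i'$, is integral) and then that this value, whatever it is, does not interfere with the $(1-\mu)v'(a)I$ term when taking the determinant modulo the maximal ideal. A clean way to phrase this is: reduce the matrix identity $(c_{ij}) = vM + (1-\mu)v'I$ modulo the maximal ideal $\mathfrak m_a$ of $\bar C(x)_a$; the reduction of $vM$ is $0$ because $v\in\mathfrak m_a$ and $vM$ has entries in $\bar C(x)_a$, so $(c_{ij})\bmod\mathfrak m_a = (1-\mu)v'(a)I$, whose determinant is a unit. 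The rest is routine.
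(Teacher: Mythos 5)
Your first half (expanding $\psi_i=v\omega_i'-(\mu-1)v'\omega_i$ and concluding that each $\psi_i$ is integral at the roots of~$v$) coincides with the paper's opening step. For the harder half you take a genuinely different route: instead of the paper's contradiction argument, which tracks the dominant term $(x-a)^\alpha\log(x-a)^\beta$ of a hypothetical unreachable integral element through the operator $v^{\mu-1}\partial_xv^{1-\mu}$ via~\eqref{eq:T}, you propose to show that the transition matrix $(c_{ij})$ from the local integral basis $\{\omega_j\}$ to $\{\psi_i\}$ has unit determinant in $\bar C(x)_a$. That criterion is correct, and your justification that the $c_{ij}$ lie in $\bar C(x)_a$ is fine. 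The gap is in the evaluation of the determinant: you reduce $(c_{ij})=vM+(1-\mu)v'I$ modulo $\mathfrak{m}_a$ and assert $vM\equiv 0$ ``because $v\in\mathfrak{m}_a$ and $vM$ has entries in $\bar C(x)_a$''. This inference is invalid for exactly the reason you flag yourself: $M$ (with $\omega_i'=\sum_jM_{ij}\omega_j$) generically has a simple pole at a singular point~$a$, so $vM$ is a unit multiple of $(x-a)M$ and its value at~$a$ is $v'(a)$ times the residue matrix of the connection, which is generally nonzero. For instance, in Example~\ref{ex:hr} with $a=0$, $v=x$, one computes $(vM)(0)=\left(\begin{smallmatrix}2&-1\\4&-2\end{smallmatrix}\right)\neq0$; for the operator of Example~\ref{example:integral-everywhere} the corresponding matrix has eigenvalues $0$ and $\tfrac13$, so it is not even nilpotent.

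The correct evaluation is $\det(c_{ij})(a)=v'(a)^n\prod_{k}(\alpha_k+1-\mu)$, where $\alpha_1,\dots,\alpha_n$ are the eigenvalues of the residue matrix $\bigl((x-a)M\bigr)(a)$, i.e.\ the local exponents attached to the basis at~$a$. To see that no factor vanishes you must use that $\{\omega_1,\dots,\omega_n\}$ is a \emph{local integral basis} at~$a$, which forces $0\le\alpha_k<1$, so that $\mu>1$ gives $\alpha_k+1-\mu\neq0$. This is precisely the ingredient the paper isolates in~\eqref{eq:T} --- the coefficient $1-\mu+\alpha$ is nonzero because $0\le\alpha<1$ and $\mu>1$ --- and your argument never engages with it: as written, your computation yields $\bigl((1-\mu)v'(a)\bigr)^n$ regardless of the exponents, and would therefore ``prove'' the statement for any basis consisting of integral elements, which is false (replace $\omega_1$ by $(x-a)\omega_1$ and take $\mu=2$ to get a vanishing factor). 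Once the exponent bound is inserted, and one notes that logarithms (Jordan blocks of the residue matrix) do not affect the eigenvalue computation, your determinant route does close and becomes a legitimate alternative to the paper's series-term argument.
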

\begin{proof}
By expanding $\psi_i=v\omega_i'-(\mu-1)v'\omega_i$ one sees that the $\psi_i$
themselves are integral at all roots of~$v$. Let now $a\in\bar{C}$ be an
arbitrary but fixed root of~$v$. We have to show that each $f\in A$ that is
integral at~$a$ can be expressed as a linear combination of the $\psi_i$ with
coefficients in $\bar C(x)_a$. To the contrary, assume that there exists an
integral element~$f$ that requires $x-a$ in the denominator of some
coefficient, i.e.,
\[
  f = \frac{1}{v} \sum_{i=1}^n c_i \psi_i \quad\text{with }c_i\in \bar{C}(x)_a
  \text{ and } c_i(a)\neq0 \text{ for some } i
\]
(here we use the fact that $v$ is squarefree).  Further let $g=\sum_{i=1}^n
c_i'\omega_i$, which is obviously integral. Then also their sum
\[
  f+g = v^{\mu-1} \sum_{i=1}^n \left(c_i\bigl(v^{1-\mu}\omega_i\bigr)'
  + c_i'v^{1-\mu}\omega_i \right) 
  = v^{\mu-1} \sum_{i=1}^n \bigl(c_iv^{1-\mu}\omega_i\bigr)'
\]
must be integral. Since $\{\omega_1,\ldots,\omega_n\}$ is an integral basis at~$a$,
there exists for each $i=1,\ldots,n$ a series solution $y_i\in\bar C[[[x-a]]]$
of~$L$ such that $\omega_i\cdot y_i$ involves a term
$T=(x-a)^\alpha\log(x-a)^\beta$ with $0\leq\alpha<1$ and $\beta\in\bN$.
Let now $i$ be an index such that $c_i(a)\neq0$;
this implies that $T$ appears in $(c_i\omega_i)\cdot y_i$.
Using the fact that the $\omega_i$ form a local integral basis, it follows by
Lemma~\ref{lemma:1} that $T$ is also present in $h\cdot y_i$ where
$h=\sum_{i=1}^n c_i\omega_i$. Let now $T$ be the dominant term of $h\cdot y_i$,
i.e., among all terms with minimal $\alpha$ the one with the largest exponent~$\beta$.
From
\begin{equation}\label{eq:T}
  \bigl((x-a)^{\mu-1} \partial_x (x-a)^{1-\mu}\bigr) \cdot T =
  (1-\mu+\alpha)(x-a)^{\alpha-1}\log(x-a)^\beta + \beta(x-a)^{\alpha-1}\log(x-a)^{\beta-1}
\end{equation}
it follows that $(x-a)^{\alpha-1}\log(x-a)^\beta$ is the dominant term of
$\bigl(v^{\mu-1}\partial_xv^{1-\mu}\bigr)\cdot (h\cdot y_i)$; here we use the
assumption that $\mu>1$, because for $\mu=1$ and $\alpha=0$ the
coefficient $(1-\mu+\alpha)$ in~\eqref{eq:T} is zero. This calculation reveals that
$v^{\mu-1}\bigl(v^{1-\mu}h\bigr)'=f+g$ is not integral at~$a$, which contradicts our assumption on
the integrality of~$f$. Hence $\{\psi_1,\ldots,\psi_n\}$ is a local integral
basis at~$a$.
\end{proof}

Let $\{\omega_1,\ldots,\omega_n\}$ be an integral basis for~$A$.
Further let $e,m_{i,j}\in K[x]$ ($1\leq i,j\leq n$) be such that
$e\omega_i'=\sum_{j=1}^n m_{i,j}\omega_i$ and
$\gcd(e,m_{1,1},m_{1,2},\ldots,m_{n,n})=1$ as in Lemma~\ref{lemma:e}. For describing
the Hermite reduction we fix an integrand $f\in A$ and represent it in the
integral basis, i.e., $f=\sum_{i=1}^n (f_i/D)\,\omega_i$ with
$D,f_1,\ldots,f_n\in K[x]$. The purpose is to find $g,h\in A$ such that
$f=g'+h$ and $h=\sum_{i=1}^n(h_i/D^\ast)\,\omega_i$ with $h_1,\ldots,h_n\in K[x]$
and $D^\ast$ denoting the squarefree part of~$D$.
As differentiating the $\omega_i$ can introduce
denominators, name\-ly the factors of~$e$, it is convenient to consider those
denominators from the very beginning on, which means that we shall assume
$e\mid D$. Note that $\gcd(D,f_1,\ldots,f_n)$ can then be nontrivial.

We now execute one step of the Hermite reduction, where the multiplicity
$\mu>1$ of some nontrivial squarefree factor~$v\in K[x]$ of $D$ is reduced.
Let $u\in K[x]$ be such that $D=uv^\mu$; it follows that $\gcd(u,v)=1$ and
$\gcd(v,v')=1$. We want to find $g_1,\ldots,g_n,h_1,\ldots,h_n\in K[x]$
such that
\begin{equation}\label{eq:hred}
  \sum_{i=1}^n \frac{f_i}{uv^\mu}\omega_i =
  \biggl(\sum_{i=1}^n\frac{g_i}{v^{\mu-1}}\omega_i\biggr)' +
  \sum_{i=1}^n \frac{h_i}{uv^{\mu-1}}\omega_i.
\end{equation}
By a repeated application of such reduction steps one can decompose any $f\in A$
as $f=g'+h$ where the denominators of the coefficients of $h$ are squarefree
and the coefficients of $g$ are proper rational functions.

In order to determine the unknown polynomials $g_1,\ldots,g_n$ in~\eqref{eq:hred},
clearing the denominator $uv^\mu$ yields
\begin{equation}\label{eq:clear}
  \sum_{i=1}^n f_i\omega_i = \sum_{i=1}^n \biggl( uvg_i'\omega_i +
  uv^\mu g_i\left(v^{1-\mu}\omega_i\right)' + vh_i\omega_i \biggr),
\end{equation}
and then this equation is reduced modulo~$v$:
\begin{equation}\label{eq:modv}
  \sum_{i=1}^n f_i\omega_i =
  \sum_{i=1}^n g_iuv^\mu\left(v^{1-\mu}\omega_i\right)' \mod v.
\end{equation}
By Lemma~\ref{lemma:ibv} and from $\gcd(u,v)=1$ it follows that
the elements $uv^\mu\left(v^{1-\mu}\omega_i\right)'$ form a local integral basis
at each root of~$v$, which implies that the coefficients $g_i$ are
uniquely determined modulo~$v$.

By Lemma~\ref{lemma:e} the polynomial~$e$ is squarefree and therefore $e\mid uv$;
hence we can write $uv=ew$ for some $w\in K[x]$. By rewriting the
derivatives of the $\omega_i$ in terms of the integral basis,
Equation~\eqref{eq:modv} turns into
\begin{align*}
  \sum_{i=1}^n f_i\omega_i
  &= \sum_{i=1}^n g_i \bigl( uv\omega_i' - (\mu-1)uv'\omega_i \bigr) \mod v\\
  &= \sum_{i=1}^n g_i\, \biggl( w\sum_{j=1}^n m_{i,j}\omega_j - (\mu-1)uv'\omega_i \biggr) \mod v.
\end{align*}
Comparing coefficients w.r.t.\ $\omega_1,\ldots,\omega_n$ yields a
system of linear equations over $K[x]/\<v>$ for the unknown
functions $g_1,\ldots,g_n$. This system has
a unique solution. For the coordinates of the solution vector $(g_1,\dots,g_n)$,
we can choose representatives in $K[x]$ whose degrees are less than~$\deg_x(v)$.

The remaining unknowns $h_1,\ldots,h_n$ are obtained by plugging
$g_1,\ldots,g_n$ into Equation~\eqref{eq:clear}.

\begin{example}\label{ex:hr}
We consider the fuchsian D-finite function
\[
  \frac{1}{x^2} \log\Bigl(\frac{1}{x^2}-1\Bigr)\sqrt{\frac{1+x}{1-x}}.
\]
This function is annihilated by the second-order differential operator
\[
  L = (x^2-1)^2 x^2 \partial_x^2 + (x^2-1) (x+1) (7x-5) x \partial_x + 
    8x^4+5x^3-11x^2-5x+4.
\]
Using the algorithm described of~\cite{kauers15b} we compute the following
integral basis $\{\omega_1,\omega_2\}$ for $A=\bC(x)[\partial_x]/\<L>$:
\[
  \bigl\{ (x-1) x^2,\, (x^2-1) (x-1) x^3 \partial_x + 2(x-1) x^4 \bigr\}
\]
For the differentiation matrix, a simple calculation yields
\begin{equation}\label{eq:dmat}
  \begin{pmatrix} \omega_1' \\[2pt] \omega_2' \end{pmatrix} =
  \frac{1}{e} \begin{pmatrix} (x-1) (x+2) & 1 \\[2pt] -x^3-x^2+5x-4 & x^2-x+2 \end{pmatrix}
  \begin{pmatrix} \omega_1 \\[2pt] \omega_2 \end{pmatrix}
\end{equation}
with $e=(x^2-1)x$. As the integrand corresponds to $1\in A$, its
representation in the integral basis is
\[
  f = \frac{1}{(x-1)x^2} \omega_1 = \frac{x+1}{ex} \omega_1;
\]
using the notation employed above, we have $D=(x^2-1)x^2$, $f_1=x+1$,
and $f_2=0$. Here we can only reduce the power of~$x$ in the denominator, so
we start with $u=x^2-1$, $v=x$, and $\mu=2$. Then Equation~\eqref{eq:modv} leads
to the following linear system for the unknowns $g_1$ and $g_2$:
\[
  \begin{pmatrix} x-1 & -x^3-x^2+5 x-4 \\ 1 & 3-x \end{pmatrix}
  \begin{pmatrix} g_1 \\ g_2 \end{pmatrix} =
  \begin{pmatrix} x+1 \\ 0 \end{pmatrix} \mod x.
\]
Thus we get $g_1=3$ and $g_2=-1$ and the final result of the Hermite reduction is
\[
f=
  \biggl(\, \underbrace{\vphantom{\frac{1}{(x)}}  \frac{3}{x} \omega_1 - \frac{1}{x} \omega_2}_{=g} \,\biggr)' +\>
  \underbrace{\frac{-x^2-x+3}{(x^2-1)x} \omega_1 - \frac{1}{(x^2-1)x} \omega_2}_{=h}.
\]
\end{example}

For the integration of algebraic functions, it is known that Hermite reduction itself often
takes less time than the construction of an integral basis. If Hermite reduction is applied to some
other basis, for instance the standard basis $\{1,y,\dots,y^{n-1}\}$, it
either succeeds or it runs into a division by zero.
\cite{bronstein98a} noticed that when a division by zero occurs,
then the basis can be replaced by some other basis that is a little closer to
an integral basis, just as much as is needed to avoid this particular division
by zero. After finitely many such basis changes, the Hermite reduction will
come to an end and produce a correct output. This variant is known as lazy
Hermite reduction. The same technique also applies in the present situation.

\section{The Canonical Form Property}\label{sec:canonic}

Recall from the introduction that reduction-based creative telescoping requires
some $K$-linear map $[\cdot]\colon A\to A$ with the property that
$f-[f]$ is integrable in $A$ for every $f\in A$. This is sufficient for the
correctness of the method, but additional properties are needed in order to
ensure that the method terminates.

As also explained already in the introduction, one possibility consists in
showing that $[f]=0$ whenever $f$ is integrable. For the special case of
algebraic functions, Trager showed that his Hermite reduction has this
property~\citep[p.~50, Thm.~1]{trager84}. Essentially the same argument
works for the fuchsian case, as we will show next. The main difference is
that in the case of algebraic functions, we can exploit that all the functions
which have no poles at either a finite place or at infinity are the constant
functions. As we have pointed out above, this is no longer true in the fuchsian
D-finite case, but in this case, we can still exploit that the space of all
the functions which have no poles at all (neither at finite points nor at infinity)
form a finite-dimensional vector space over~$K$.

\begin{lemma}\label{lemma:pole_at_inf}
Let $\{\omega_1,\dots,\omega_n\}$ be an integral basis for~$A$ that is normal at
infinity. Let $g=\sum_{i=1}^ng_i\omega_i\in A$ be such that all its
coefficients $g_i\in K(x)$ are proper rational functions. If an integral
element $f\in A$ has a pole at infinity, then also $f+g$ has a pole at
infinity.
\end{lemma}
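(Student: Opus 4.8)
The key point is that a pole at infinity of an integral element behaves like a ``large degree'' phenomenon that cannot be cancelled by adding something with small degree. So the plan is to make ``degree at infinity'' precise using the normal-at-infinity basis, and then argue by contradiction: if $f+g$ were integral at infinity, we would be able to read off that $f$ is integral at infinity too.

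First I would fix $\tau_1,\dots,\tau_n\in\set Z$ such that $\{x^{\tau_1}\omega_1,\dots,x^{\tau_n}\omega_n\}$ is a local integral basis at infinity; these exist because $\{\omega_1,\dots,\omega_n\}$ is normal at infinity. Write $f=\sum_{i=1}^n(p_i/q_i)\omega_i$ with $p_i,q_i\in K[x]$ coprime; since $f$ is globally integral and $\{\omega_i\}$ is a global integral basis, Lemma~\ref{lemma:1} forces each $q_i$ to be a nonzero constant, i.e.\ $f=\sum_{i=1}^n p_i\omega_i$ with $p_i\in K[x]$. Now $f$ having a pole at infinity means, by Lemma~\ref{lemma:3} (normality at infinity), that some $p_i\omega_i$ has a pole at infinity, i.e.\ $\deg_x(p_i)>\tau_i$ for at least one index $i$; call the set of such indices $S\neq\emptyset$.

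Next I would bring in $g=\sum_{i=1}^n g_i\omega_i$ with each $g_i\in K(x)$ a proper rational function. The crucial observation is that $g_i\omega_i$ is integral at infinity for every $i$: writing $g_i=a_i/b_i$ with $\deg_x(a_i)<\deg_x(b_i)$, the function $x^{\tau_i}g_i$ still has no pole at infinity, so $g_i\omega_i=(g_i/x^{\tau_i})(x^{\tau_i}\omega_i)$ is a $\bar C(x)_\infty$-multiple of a local-integral-basis element at infinity, hence integral at infinity. Thus $g$ itself is integral at infinity. Now suppose, for contradiction, that $f+g$ is integral at infinity. Then by Lemma~\ref{lemma:3} applied to $f+g=\sum_{i=1}^n(p_i+g_i)\omega_i$, every $(p_i+g_i)\omega_i$ is integral at infinity. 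But for $i\in S$ we have $\deg_x(p_i)>\tau_i$ while $g_i$ is proper; hence $x^{\tau_i}(p_i+g_i)=x^{\tau_i}p_i+x^{\tau_i}g_i$ has a genuine pole at infinity (the first summand does, with order $\deg_x(p_i)-\tau_i\geq1$, and the second does not), so $(p_i+g_i)\omega_i$ is not integral at infinity --- a contradiction. Therefore $f+g$ has a pole at infinity.

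The only step that requires a little care --- and which I expect to be the main obstacle --- is justifying that ``$\deg_x(p_i)>\tau_i$ at a place where $g_i$ is proper'' really does force $(p_i+g_i)\omega_i$ to be non-integral at infinity, rather than having the pole mysteriously absorbed. This is exactly where normality at infinity is used twice (once to define the $\tau_i$ via a local integral basis, once through Lemma~\ref{lemma:3}), together with the elementary fact that adding a proper rational function to a polynomial of degree $>\tau_i$ cannot lower its degree at infinity below $\tau_i$. Everything else is bookkeeping with the $\tau_i$ and repeated invocations of Lemmas~\ref{lemma:1} and~\ref{lemma:3}.
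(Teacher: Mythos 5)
Your argument is correct and is essentially the paper's own proof: write $f$ with polynomial coefficients, use Lemma~\ref{lemma:3} (normality at infinity) to locate an index $i$ with $f_i\omega_i$ non-integral at infinity, observe that adding a proper rational function to a nonzero polynomial does not change its exponent at infinity, and apply Lemma~\ref{lemma:3} again to $f+g$. The only blemish is the unused aside that each $g_i\omega_i$ (hence $g$) is integral at infinity, which as stated would require $\tau_i\geq-1$ and is not guaranteed; since your contradiction never relies on it, the proof stands.
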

\begin{proof}
Since $f$ is integral we can write it as
$f=f_1\omega_1+\cdots+f_n\omega_n$ with $f_i\in K[x]$.
If $f$ has a pole at infinity, there is at least one index~$i$
such that $f_i\omega_i$ has a pole at infinity.
Since $f_i$ is a polynomial and $g_i$ is a proper rational function,
$f_i$~and $f_i+g_i$ have the same exponent at infinity.
Therefore, if $f_i\omega_i$ has a pole at infinity, so does $(f_i+g_i)\omega_i$.
But then, by Lemma~\ref{lemma:3}, also $f+g=\sum_{i=1}^n(f_i+g_i)\omega_i$ has
a pole at infinity.
\end{proof}

\begin{theorem}\label{thm:intiff0}
  Suppose that $f\in A$ has at least a double root at infinity (i.e., every
  series in $\bar C[[[x^{-1}]]]$ associated to $f$ only contains monomials
  $(1/x)^\alpha\log(x)^\beta$ with $\alpha\geq2$).
  Let $W=\{\omega_1,\dots,\omega_n\}$ be an integral basis for $A$ that is normal at infinity,
  and let $f=g'+h$ be the  result of the Hermite reduction with respect to~$W$.
  Let $V\subseteq A$ be the $K$-vector space of all elements that are integral at all places,
  including infinity, and let $U=\{v':v\in V\}$ be the space of all elements of~$A$
  that are integrable in~$V$.
  Then $f$ is integrable in~$A$ if and only if $h\in U$.
\end{theorem}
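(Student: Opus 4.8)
The plan is to prove the two implications separately, with the ``only if'' direction being the real content. For the easy ``if'' direction: if $h\in U$, then $h=v'$ for some $v\in V\subseteq A$, and since $f=g'+h=g'+v'=(g+v)'$ with $g+v\in A$, the element $f$ is integrable in~$A$.

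For the ``only if'' direction, assume $f$ is integrable in~$A$, say $f=p'$ with $p\in A$. Combining with $f=g'+h$ from the Hermite reduction gives $h=(p-g)'$, so $h$ is integrable in~$A$; write $h=q'$ with $q=p-g\in A$. The goal is to upgrade this to $q\in V$, i.e.\ to show that $q$ is integral at every finite place and at infinity. First I would recall what the Hermite reduction guarantees about $h$ and $g$: the coefficients of $h$ in the basis $W$ have squarefree denominators, and the coefficients of $g$ are proper rational functions. The key classical observation (Trager's argument) is that if $q$ had a pole at a finite place $a$, then $q'$ would have a pole of order at least two there (since integrality at~$a$ is not preserved by differentiation, but the order of a genuine pole strictly increases under differentiation, using that $(x-a)q'$ is integral when $q$ is but $q'$ itself is not), contradicting that $h=q'$ has only squarefree denominators. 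Here one must argue place by place using the local integral basis and Lemma~\ref{lemma:1}: a pole of $q$ shows up as a pole in some coefficient $q_i$, and then $q_i'$ has a strictly higher pole order, which by Lemma~\ref{lemma:3} (normality suffices) forces $h=\sum (q_i')\omega_i + (\text{cross terms from }\omega_i')$ to have a non-squarefree denominator — the factors of~$e$ from $\omega_i'$ are harmless because $e\mid D$ was arranged and $e$ is squarefree by Lemma~\ref{lemma:e}. Hence $q$ is integral at all finite places, so $q=\sum_{i=1}^n q_i\omega_i$ with $q_i\in K[x]$.

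It remains to handle infinity, and this is where the hypothesis that $f$ has a double root at infinity enters. Since $f$ has at least a double root at infinity and $g$ has proper-rational-function coefficients, $g'$ is integral at infinity (in fact $g'$ has a root at infinity, using the stronger preservation property: $xg'$ is integral at infinity when $g$ is); therefore $h=f-g'$ is integral at infinity as well. Now $h=q'$ and $q$ has polynomial coefficients $q_i$. If $q$ had a pole at infinity, then $q$ is \emph{not} integral at infinity, and by Lemma~\ref{lemma:pole_at_inf} — applied with the roles set so that $q$ plays the integral-with-pole element — no: more directly, I would invoke the ``stronger property'' stated before Lemma~\ref{lemma:ibv}, namely that integrality at infinity \emph{is} preserved by differentiation and even $xq'$ is integral at infinity when $q$ is; but here we need the converse direction. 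The clean way: if $q$ has a pole at infinity, write $q=\sum q_i\omega_i$ with $q_i\in K[x]$; let $\tau_i$ be as in the normality data, so $q_i\omega_i$ is integral at infinity iff $\deg q_i\le\tau_i$. Since $q$ has a pole at infinity, some $\deg q_i>\tau_i$; I claim $q'$ then also has a pole at infinity, so $h=q'$ does, contradicting integrality of $h$ at infinity. This claim follows because differentiation lowers the degree of $q_i$ by exactly one while $\omega_i'=\frac1e\sum_j m_{i,j}\omega_j$ with $\deg m_{i,j}<\deg e$ (Lemma~\ref{lemma:degM}), so the ``highest'' term $q_\ell\omega_\ell$ with $\deg q_\ell-\tau_\ell$ maximal cannot be cancelled, and Lemma~\ref{lemma:3} (normality at infinity) finishes it. Therefore $q$ is integral at infinity too, so $q\in V$ and $h=q'\in U$, as desired.

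I expect the main obstacle to be the infinity argument: one must carefully track how the degrees of the polynomial coefficients and the entries $m_{i,j}/e$ of the differentiation matrix interact so that no cancellation can mask the pole at infinity, and one must make sure the ``double root at infinity'' hypothesis is used precisely where it is needed (to get $h$ integral at infinity in the first place, without which the statement would be false). The finite-place part is essentially Trager's original argument and should go through routinely given Lemmas~\ref{lemma:1}, \ref{lemma:e}, and~\ref{lemma:3}, together with the remark that the $\omega_i'$ only introduce the squarefree denominator~$e$.
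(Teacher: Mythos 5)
The ``$\Leftarrow$'' direction and the finite-places half of ``$\Rightarrow$'' are fine and essentially coincide with the paper's proof: writing $h=q'$ and observing that a pole of $q$ at a finite place $a$ would force $q'$ to have a pole of order worse than simple there, which is ruled out by the squarefree denominators of $h$ and the integrality of the $\omega_i$ (with Lemmas~\ref{lemma:1} and~\ref{lemma:3} preventing cancellation). The problem is your treatment of infinity, where there are two genuine gaps. First, the claim that $g'$ (hence $h=f-g'$) is integral at infinity does not follow from the $g_i$ being proper rational functions: $W$ is only a \emph{global} integral basis, so the $\omega_i$ may themselves have poles at infinity (the integers $\tau_i$ can be negative), and then $g_i\omega_i$ can have a pole at infinity even though $g_i$ vanishes there; your parenthetical invokes ``$xg'$ is integral at infinity when $g$ is'' before having shown that $g$ is integral at infinity. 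Relatedly, Lemma~\ref{lemma:degM} applies to a \emph{local integral basis at infinity}, not to $W$, so the degree bound $\deg_x(m_{i,j})<\deg_x(e)$ you use is not available.

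Second, and more seriously, the implication ``$q$ has a pole at infinity $\Rightarrow$ $q'$ has a pole at infinity'' is false. At infinity the exponent convention reverses: a term $(1/x)^\alpha\log(x)^\beta$ with $\alpha<0$ differentiates to terms with exponent $\alpha+1$, so differentiation \emph{lowers} the pole order at infinity by one; a simple pole at infinity (e.g.\ $q\sim x$, so $q'\sim 1$, or $q_\ell$ with $\deg q_\ell=\tau_\ell+1$) differentiates to something integral at infinity. This is exactly why the paper notes that integrality at infinity \emph{is} preserved by differentiation --- the converse you need cannot hold. The paper's proof sidesteps this by never differentiating at infinity: it considers the antiderivative $g+H$ of $f$ directly, notes that the double root of $f$ at infinity forces $g+H$ to have no pole at infinity (integration shifts exponents from $\geq 2$ to $\geq 1$), and invokes Lemma~\ref{lemma:pole_at_inf} to conclude that a pole of $H$ at infinity would propagate to $g+H$, a contradiction. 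You should replace your infinity argument by this one; the double-root hypothesis is used to control the \emph{antiderivative} of $f$, not the derivative of $q$.
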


\begin{proof}
The direction ``$\Leftarrow$'' is trivial. To show the implication
``$\Rightarrow$'' assume that $f$ is integrable in~$A$. From $f=g'+h$ it follows that
then also $h$ is integrable in~$A$; let $H\in A$ be such that $H'=h$.  In order to show
that $h\in U$, we show that $H\in V$, i.e., we show that $H$ has no finite poles and
no poles at infinity.

It is clear that $H$ has no finite poles because $h$ has at most simple poles
(i.e., all series associated to $h$ have only exponents $\alpha\geq-1$).
This follows from the facts that the $\omega_i$ are integral and that
the coefficients of~$h$ have squarefree denominators.

If $H$ has a pole at infinity, then by Lemma~\ref{lemma:pole_at_inf} also
$g+H$ must have a pole at infinity, because Hermite reduction produces
$g=\sum_i g_i\omega_i$ with proper rational functions~$g_i$.  On the other
hand, since $f=g'+h=(g+H)'$ has at least a double root at infinity by
assumption, $g+H$ must have at least a single root at infinity. This is
a contradiction.
\end{proof}

By Lemma~\ref{lemma:nopoles-space}, the vector space~$V$ has finite dimension and computing a
vector space basis of it is not harder than computing an integral basis. Once
a basis $\{b_1,\dots,b_d\}$ of $V$ is known, it is also easy to obtain a basis
of~$U$, as this space is generated by $\{b_1',\dots,b_d'\}$. Therefore, we
can decide whether a given $f\in A$ with a double pole at infinity is integrable in~$A$
by first executing Hermite reduction, and then checking whether the Hermite
remainder~$h$ belongs to~$U$. More generally, by performing a reduction modulo~$U$
as a post-processing step after the Hermite reduction, we can even ensure that $f$
is integrable if and only if $h=0$.

\begin{example}
Let $L = 3 (x^3-x) D_x^2 + 2 (3x^2 - 1) D_x$ be the same operator as in
Example~\ref{example:integral-everywhere}; its solution space is spanned by
$y_1(x)=1$ and $y_2(x)=x^{1/3}{}\,_2F_{\!1}\bigl(\frac16,\frac23;\frac76;x^2\bigr)$.
An integral basis for $A=\bQ(x)[\partial_x]/\<L>$, that is also normal at
infinity, is given by $\omega_1=1$ and $\omega_2=(x^3-x)\partial_x$. Recall
that both solutions are integral everywhere, and hence $\omega_1\in V$.
Actually, the $K$-vector space~$V$ is spanned by~$\omega_1$, as can be seen
from the fact that $\tau_1=0$ and $\tau_2=-1$. The straightforward calculation
\[
  W' = \frac{1}{e}MW =
  \frac{1}{x^3-x}\begin{pmatrix}0 & 1\\ 0 & x^2-\frac13\end{pmatrix}W
  \quad\text{for }W=\begin{pmatrix}\omega_1\\ \omega_2\end{pmatrix}.
\]
exhibits that $e=x^3-x$. Consider now
\[
  f = \frac{3}{x^2}\omega_1 + \frac{2(2x+1)}{(x^3-x)^2}\omega_2,
\]
which has a double root at infinity. The result of the Hermite reduction is
\[
  f = \left(-\frac{3}{x}\omega_1-\frac{3(2x+1)}{2(x^3-x)}\omega_2\right)'
    - \frac{3}{x^3-x}\omega_2,
\]
which has a nonzero remainder. According to Theorem~\ref{thm:intiff0}, $f$
is integrable if this remainder lies in the subspace~$U=\{v':v\in V\}$.
Using the matrix~$M$ above, we find that $\omega_1'=\frac{1}{x^3-x}\omega_2$,
which is indeed a scalar multiple of our remainder. Hence, $f$ is integrable:
\[
  f = \left(-\frac{3(x+1)}{x}\omega_1 - \frac{3(2x+1)}{2(x^3-x)}\omega_2\right)'.
\]
\end{example}

Note that the condition in Theorem~\ref{thm:intiff0} that $f$ has a double
root at infinity is not a restriction, as it can always be achieved by
a suitable change of variables. Let $a\in C$ be an ordinary point of~$L$; by
the substitution $x\to a+1/x$ the ordinary point~$a$ is moved to
infinity. From
\[
  \int f(x) \,\mathrm{d}x = \int f\left(\frac{1}{x}+a\right)\left(-\frac{1}{x^2}\right) \mathrm{d}x
\]
we see that the new integrand has a double root at infinity.

Moreover, since the action of $\partial_t$ on series domains is defined coefficient-wise,
it follows that when $f$ has at least a double root at infinity (with respect to~$x$),
this is also true for $\partial_t\cdot f, \partial_t^2\cdot f, \partial_t^3\cdot f,\dots$,
and then also for every $K$-linear combination $p_0f+p_1\partial_t\cdot f+\cdots+p_r\partial_t^r\cdot f$.
Thus Theorem~\ref{thm:intiff0} implies that $p_0+p_1\partial_t+\cdots+p_r\partial_t^r$
is a telescoper for $f$ if \emph{and only if}
$[p_0f+p_1\partial_t\cdot f+\cdots+p_r\partial_t^r\cdot f]=0$.

We already know for other reasons~\citep{Zeilberger1990,chyzak00} that
telescopers for D-finite functions exist, and therefore the reduction-based
creative telescoping procedure with Hermite reduction with respect to an
integral basis that is normal at infinity plus reduction modulo $U$
as reduction function $[\cdot]$ succeeds when applied to an integrand $f\in A$ that has a double root at infinity.
In particular, the method finds a telescoper of smallest possible order.
Again, if $f$ has no double root at infinity, we can produce one by a change of variables.
Note that a change of variables $x\to a+1/x$ with $a\in C$ has no effect on
the telescoper.

\section{Polynomial Reduction}\label{sec:polynomial}

Recall that instead of requesting that $[f]=0$ if and only if $f$ is integrable
(first approach), we can also justify the termination of reduction-based
creative telescoping by showing that the $K$-vector space generated by $\bigl\{\,[\partial_t^if]:i\in\set N\,\bigr\}$
has finite dimension (second approach). If $[\cdot]$ is just the Hermite
reduction, we do not necessarily have this property. We therefore introduce below an
additional reduction, called \emph{polynomial reduction,} which we apply after
Hermite reduction. We then show that the combined reduction (Hermite reduction
followed by polynomial reduction) has the desired dimension property for the
space of remainders. As a result, we obtain a new bound on the order of the
telescoper, which is similar to that of~\cite{chen14a}.

In this approach, we use two integral bases. First we use a global integral basis
in order to perform Hermite reduction. Then we write the
remainder $h$ with respect to some local integral basis at infinity and perform the
polynomial reduction on this representation.

Throughout this section let $W=(\omega_1,\ldots,\omega_n)^T\in A^n$ be such
that $\{\omega_1, \ldots, \omega_n\}$ is a global integral basis of~$A$, and
let $e\in K[x]$ and $M=((m_{i,j}))\in K[x]^{n\times n}$ be such that $eW'=MW$
and $\gcd(e, m_{1, 1}, m_{1, 2}, \ldots, m_{n ,n})=1$. The Hermite reduction
described in Section~\ref{sec:hermite} decomposes an element $f\in A$ into
the form
\[
  f = g' + h = g' + \sum_{i=1}^n \frac{h_i}{de} \omega_i,\qquad
  g, h\in A,
\]
with $h_i, d\in K[x]$ such that $\gcd(d, e)=\gcd(d,h_1,\dots,h_n)=1$ and $d$ is squarefree.
\begin{lemma}\label{LEM:d}
  Let $h$ be as above. If $h$ is integrable in~$A$,
  then its integral is integral, and we have $d\in K$.
\end{lemma}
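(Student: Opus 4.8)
The plan is to argue that the integral $H$ of $h$ must be integral at every finite place and at infinity, hence lies in the finite-dimensional space $V$, and then to exploit the degree/defect bookkeeping at infinity to force $d$ to be a constant. First I would observe that $H$ has no finite poles: since $h=\sum_i (h_i/(de))\,\omega_i$ has at most simple poles at finite places (the $\omega_i$ are integral and the denominator $de$ is squarefree apart from the contribution of $e$, but the Hermite reduction guarantees the coefficients of $h$ have squarefree denominators; more precisely $h$ has only exponents $\ge-1$), its antiderivative $H$ has only nonnegative exponents there. This is exactly the argument already used in the proof of Theorem~\ref{thm:intiff0}. So the only issue is infinity.

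Next I would show that $H$ is also integral at infinity, using the normality-at-infinity of the basis together with Lemma~\ref{lemma:pole_at_inf}. Writing $h=H'$ and noting that after Hermite reduction $H$ can be chosen so that its coefficients in $W$ are proper (this is part of the Hermite reduction setup in this section, where $h=\sum_i (h_i/(de))\omega_i$ with proper coefficients), if $H$ had a pole at infinity then by Lemma~\ref{lemma:pole_at_inf} the function $g+H$ would have a pole at infinity, where $g$ is the $g$-part of the Hermite decomposition of the original $f$; but the point is subtler here because we only have $h$, not a global double-root hypothesis. The cleaner route: $H$ integral at all finite places and $H'=h$ with $h$ having at worst a simple pole nowhere-at-infinity. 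So $H\in V$ exactly when $H$ is integral at infinity. If $H$ is not integral at infinity, say it has a pole of order $\ge1$ there, then $h=H'$ also has a pole at infinity (differentiation at infinity does not improve integrality; indeed if $H$ has exponent $-r<0$ at infinity then $H'$ has exponent $-r$ as well, since $xH'$ is integral iff $H$ is). Hence $h$ would have a pole at infinity.

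So the key step is to rule this out: I must show that the Hermite remainder $h$ of an \emph{integrable} $h$ cannot have a pole at infinity, which is where $d\in K$ enters. Suppose $\deg d\ge1$. Then $d$ contributes a genuine denominator at some finite place $a$ (a root of $d$), giving $h$ a simple pole there; write $h=\sum_i (h_i/(de))\omega_i$ with $\gcd(d,h_1,\dots,h_n)=1$, so at a root of $d$ at least one $h_i\omega_i/(de)$ has a pole, and by Lemma~\ref{lemma:1} (using integrality of the $\omega_i$) $h$ genuinely has a simple pole at $a$. I would then invoke a residue-type argument: if $h=H'$ with $H$ integral at all finite places, then $H$ has a (say) logarithmic behaviour forced around $a$, contradicting $H\in\bar C(x)_a\cdot\{\omega_i\}$, i.e.\ contradicting integrality of $H$ at $a$. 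That is, a simple-pole term $c/(x-a)\cdot(\text{integral})$ cannot be the derivative of something integral at $a$ unless $c=0$ on the nose — the antiderivative picks up a $\log(x-a)$. Since the paper's integrality convention treats series without $\log$ at exponent $0$ as integral but a forced $\log(x-a)$ at exponent $0$ is allowed... here one must be careful, but the real obstruction is that a nonzero residue forces non-rational (transcendental $\log$) behaviour, which $H\in A$ represented over $K(x)$ cannot have at that place in the required integral-basis coefficients. This residue/transcendence step is the main obstacle, and I would handle it by the standard Rothstein--Trager-style observation that the residues of $h$ at the roots of $d$ must vanish for $h$ to be integrable with integral (rational-coefficient) antiderivative, forcing $d\mid \gcd$ of the relevant data, hence $d\in K$ by the coprimality assumption. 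Once $d\in K$, the denominator $de$ is just $e$ up to a constant, so $h$ is genuinely integral at all finite places, and by the argument of the previous paragraph its integrable antiderivative $H$ is then integral at infinity as well, so $H\in V$ and the integral of $h$ is integral, as claimed.
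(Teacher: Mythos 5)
Your first step---that the antiderivative $H$ has no finite poles because $h$ has at most simple poles at finite places---is exactly the paper's argument and is fine. But two things go wrong after that. First, ``integral'' in this paper means locally integral at all \emph{finite} places only (see the definition of globally integral), so the lemma does not assert, and you do not need, integrality of $H$ at infinity; your second paragraph is off-target, and its key claim is also false: at infinity differentiation \emph{lowers} the pole order (if $H\cdot y$ has a term $x^{r}=(1/x)^{-r}$ then $H'\cdot y$ has $(1/x)^{-r+1}$), so e.g.\ $H=x$ has a pole at infinity while $H'=1$ does not. Hence ``$H$ not integral at infinity $\Rightarrow$ $h$ has a pole at infinity'' is wrong.

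Second, and more seriously, the step you yourself flag as the main obstacle---deducing $d\in K$ by a residue/Rothstein--Trager argument---does not work as stated and is where the genuine gap lies. The principle ``a simple pole cannot be the derivative of something integral unless the residue vanishes'' fails in this setting: a term $(x-a)^{-1/2}$ integrates to $(x-a)^{1/2}$, which is integral, and a term $(x-a)^{-1}$ integrates to $\log(x-a)$, which is \emph{also} integral by the paper's convention (recall $\log(x)$ has no pole at $0$), so no contradiction with integrality of $H$ is obtained. The paper's proof goes in the opposite order and avoids residues entirely: having shown (your paragraph~1) that $H=\sum_i b_i\omega_i$ with $b_i\in K[x]$, one computes
\[
  h=H'=\sum_{j=1}^n\Bigl(b_j'+\frac{1}{e}\sum_{i=1}^n b_i m_{i,j}\Bigr)\omega_j,
\]
so the coefficients of $h$ in the basis have denominators dividing~$e$. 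Since the representation $h=\sum_i (h_i/(de))\,\omega_i$ in a basis is unique and $\gcd(d,e)=\gcd(d,h_1,\dots,h_n)=1$, the factor $d$ can have no roots, i.e.\ $d\in K$. You would need to replace your residue sketch by this (or an equivalent) argument to close the proof.
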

\begin{proof}
  Suppose that $h$ is integrable in~$A$, i.e., there exist $b_i\in K(x)$
  such that $h = \bigl(\sum_{i=1}^n b_i \omega_i\bigr)'$.
  If one of the $b_i$ had a pole at a finite place, then $\sum_{i=1}^n b_i\omega_i$
  would have a pole at a finite place, because $\{\omega_1,\dots,\omega_n\}$ is
  an integral basis. But then $h$ would have a pole of order greater than~$1$
  there, which is impossible because $\gcd(d,e)=1$ and $d$ is squarefree and
  $\{\omega_1,\dots,\omega_n\}$ is an integral basis.
  Therefore, $b_1,\dots,b_n\in K[x]$ and we have shown that the integral of $h$
  is integral. The claim on $d$ then follows directly from the definition of~$e$.
\end{proof}

Note that the lemma continues to hold when $\{\omega_1,\dots,\omega_n\}$ is
a local integral basis at a finite place~$a\in\bar C$. In this case, we can conclude
that the integral is locally integral at $a$ and $d\in K(x)_a$. 

By the extended Euclidean algorithm, we compute $r_i, s_i\in K[x]$ such that
$h_i = r_i e + s_i d$ and $\deg_x(r_i) < \deg_x(d)$. Then the Hermite remainder~$h$
decomposes as
\begin{equation}\label{EQ:h}
  \sum_{i=1}^n \frac{h_i}{de}\omega_i =
  \sum_{i=1}^n \frac{r_i}{d}\omega_i +
  \sum_{i=1}^n \frac{s_i}{e}\omega_i.
\end{equation}
We now introduce the \emph{polynomial reduction} whose goal is to confine the $s_i$ to a finite-dimensional
vector space over~$K$. Similar reductions have been introduced and used in creative telescoping
for hyperexponential functions~\citep{bostan13a} and hypergeometric terms~\citep{chen15a}.
Our version below is slightly different from these, and also from the reduction given by~\cite{chen16} for
the algebraic case, because we will be considering Laurent polynomials instead of polynomials.
Note that the same idea can be applied in the polynomial reduction for $q$-hypergeometric terms~\cite{du16}.

Throughout the rest of the section, let $V = (\nu_1, \ldots, \nu_n)^T\in A^n$ be such that its entries
form a local integral basis at $\infty$ as well as a local integral basis at every $a\in\bar C\setminus\{0\}$,
and which is normal at~$0$.
The existence of such a basis follows from the existence of global integral bases that are normal at infinity,
as follows.

\begin{lemma}\label{LM:CB}
Let $W =\{\omega_1, \ldots, \omega_n\}$ be an integral basis of~$A$ that is normal at infinity. Then
there exist integers $\tau_1, \ldots, \tau_n\in\set Z$ such that
$V := \{\nu_1, \ldots, \nu_n\}$ with $\nu_i = x^{\tau_i} \omega_i$ ($i=1,\dots,n$)
is a basis of~$A$ which is normal at $0$ and integral at all other points (including infinity).
\end{lemma}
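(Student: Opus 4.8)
The plan is to construct the $\tau_i$ directly so that $\{x^{\tau_i}\omega_i\}$ becomes a local integral basis at infinity, and then observe that this same scaling leaves the good behavior at all finite points intact while producing normality at~$0$. Concretely, I would take the $\tau_i$ to be exactly the integers supplied by the normality of $\{\omega_1,\dots,\omega_n\}$ at infinity: since the basis is normal at infinity, there are rational functions $r_1,\dots,r_n\in K(x)$ with $\{r_1\omega_1,\dots,r_n\omega_n\}$ a local integral basis at infinity. One first argues that each $r_i$ can be replaced by a power $x^{\tau_i}$ without losing the local-integral-basis property: the module of elements locally integral at infinity is a $\bar C(x)_\infty$-module, and multiplying a basis element by a unit of $\bar C(x)_\infty$ (equivalently, by any rational function whose order at infinity matches that of~$r_i$) again yields a basis. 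Since $x^{-\tau_i}$ has order $\tau_i$ at infinity, choosing $\tau_i = -\ord_\infty(r_i)$ does the job. This is, in fact, exactly the construction recalled just before Lemma~\ref{lemma:bound-exps}, so I may simply cite that discussion.

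Next I would verify the three claimed properties of $V=\{\nu_1,\dots,\nu_n\}$ with $\nu_i=x^{\tau_i}\omega_i$. That $V$ is a $K(x)$-vector space basis of $A$ is immediate, since each $\nu_i$ differs from $\omega_i$ by a nonzero scalar in $K(x)$. Integrality at infinity — indeed, being a \emph{local integral basis} at infinity — holds by the choice of the $\tau_i$ in the previous paragraph. For the finite points: if $a\in\bar C\setminus\{0\}$, then $x^{\tau_i}$ is a unit in $\bar C(x)_a$ (it has neither a zero nor a pole at~$a$), so $\{x^{\tau_1}\omega_1,\dots,x^{\tau_n}\omega_n\}$ spans the same $\bar C(x)_a$-module as $\{\omega_1,\dots,\omega_n\}$, which is the full module of elements locally integral at~$a$ because $\{\omega_i\}$ is a global integral basis. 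Hence $V$ is a local integral basis at every $a\in\bar C\setminus\{0\}$. Finally, normality at~$0$: we must produce $\tilde r_1,\dots,\tilde r_n\in K(x)$ with $\{\tilde r_i\nu_i\}$ a local integral basis at~$0$. Taking $\tilde r_i = x^{-\tau_i}$ gives $\tilde r_i\nu_i=\omega_i$, and $\{\omega_1,\dots,\omega_n\}$, being a global integral basis, is in particular a local integral basis at~$0$. Thus $V$ is normal at~$0$ by definition.

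The one point that requires a little care — and the only place where anything nontrivial happens — is the first paragraph's claim that the rational functions $r_i$ witnessing normality at infinity may be replaced by monomials $x^{\tau_i}$. The subtlety is that a local integral basis at infinity is only determined up to multiplication of its elements by units of $\bar C(x)_\infty$, i.e.\ by rational functions of order exactly~$0$ at infinity; so one needs $x^{\tau_i}/r_i$ to be such a unit, which forces the definition $\tau_i=-\ord_\infty(r_i)$ and uses that $\bar C(x)_\infty^\times$ consists precisely of the rational functions with neither zero nor pole at infinity. Everything else is a routine unit-bookkeeping argument: scaling a basis element of a module over $\bar C(x)_a$ by a unit of that ring changes neither the submodule it generates together with the others nor the property of being a local integral basis. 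I would therefore keep the proof short, essentially pointing back to the construction preceding Lemma~\ref{lemma:bound-exps} for the existence and bound on the $\tau_i$, and then spelling out the three verifications above.

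\begin{proof}
Since $\{\omega_1,\dots,\omega_n\}$ is normal at infinity, there is (by the construction recalled before Lemma~\ref{lemma:bound-exps}) a local integral basis at infinity of the form $\{x^{\tau_1}\omega_1,\dots,x^{\tau_n}\omega_n\}$ with $\tau_1,\dots,\tau_n\in\set Z$. Set $\nu_i:=x^{\tau_i}\omega_i$ and $V:=\{\nu_1,\dots,\nu_n\}$. As each $\nu_i$ differs from $\omega_i$ by a nonzero element of $K(x)$, $V$ is a $K(x)$-vector space basis of~$A$, and by construction it is a local integral basis at infinity.

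Let $a\in\bar C\setminus\{0\}$. Then $x^{\tau_i}$ is a unit in $\bar C(x)_a$ for every~$i$, so $V$ generates the same $\bar C(x)_a$-module as $\{\omega_1,\dots,\omega_n\}$. Since $\{\omega_1,\dots,\omega_n\}$ is a global integral basis, it is in particular a local integral basis at~$a$; hence so is~$V$.

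Finally, putting $\tilde r_i:=x^{-\tau_i}\in K(x)$, we have $\tilde r_i\nu_i=\omega_i$, and $\{\tilde r_1\nu_1,\dots,\tilde r_n\nu_n\}=\{\omega_1,\dots,\omega_n\}$ is a local integral basis at~$0$ because $\{\omega_1,\dots,\omega_n\}$ is a global integral basis. Therefore $V$ is normal at~$0$.
\end{proof}
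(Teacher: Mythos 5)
Your proof is correct and follows essentially the same route as the paper: both arguments factor the rational functions witnessing normality at infinity as $x^{\tau_i}$ times a unit of $\bar C(x)_\infty$, observe that $x^{\tau_i}$ is a unit in $\bar C(x)_a$ for $a\neq 0$ so local integrality at finite nonzero points is preserved, and recover normality at $0$ via the multipliers $x^{-\tau_i}$. Your write-up is somewhat more explicit than the paper's about why units of the local ring can be dropped from a local integral basis, which is a reasonable thing to spell out, but it is the same proof.
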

\begin{proof}
It is clear that such a basis $V$ will be normal at zero, because multiplying the generators by
the rational functions $x^{-\tau_i}$ brings it back to a global integral basis, which is in particular
a local integral basis at zero.
It is also clear that such a basis will be integral at every other point $a\in\bar C\setminus\{0\}$, because the
multipliers $x^{\tau_i}$ are locally units at such~$a$.
Finally, since the original basis is normal at infinity, there exist rational functions $u_1,\dots,u_n$
such that $\{u_1\omega_1,\dots,u_n\omega_n\}$ is a local integral basis at infinity.
Since $u_i$ can be written as $u_i=x^{\tau_i}\tilde{u}_i$ with $\tau_i\in\set Z$ and $\tilde{u}_i$ being a unit
in $\bar{C}(x)_\infty$, we see that also $V$ is a local integral basis at infinity.
\end{proof}

In the case of algebraic functions~\citep{chen16}, it is a consequence of Chevalley's theorem that the exponents
$\tau_i$ can never be positive, and that they can be zero only if the function is constant.
In the more general fuchsian situation, this is no longer the case. 
Let $a\in K[x]$ and $B = ((b_{i, j}))\in K[x]^{n \times n}$ be such that $aV'=BV$ and
$\gcd(a, b_{1, 1}, b_{1, 2}, \ldots, b_{n ,n})=1$. By Lemma~\ref{LEM:d}, we may assume that
$a=x^\lambda e$ for some $\lambda\in\set N$, and we will do so. Writing the expression
$\sum_{i=1}^n\frac{s_i}{e}\omega_i$ from equation~\eqref{EQ:h} in terms of the basis~$V$,
we obtain $\sum_{i=1}^n\frac{x^{\lambda-\tau_i}s_i}{x^\lambda e}\nu_i$, where now
the numerators of the coefficients are Laurent polynomials. Note however that since
$s_1,\dots,s_n$ are in~$K[x]$, the new numerators cannot have arbitrarily negative exponents.
In fact, they will live in $x^{-\tau}K[x]$ where $\tau=\max(\tau_1,\dots,\tau_n)$.
An a priori bound for $\tau$ follows from Lemma~\ref{lemma:bound-exps}.

The purpose of polynomial reduction is to write
\[
  \sum_{i=1}^n\frac{x^{\lambda-\tau_i}s_i}{x^{\lambda}e}\nu_i=\tilde g'+
  \sum_{i=1}^n\frac{\tilde s_i}{x^\lambda e}\nu_i,
\]
where the $\tilde s_i$ belong to a finite-dimensional subspace of~$x^{-\tau}K[x]$.
Let us write $K[x]_{\eta,\mu}$ for the subspace of $K[x,x^{-1}]$ consisting of all Laurent polynomials
whose exponents are at least $\eta$ and at most~$\mu$. Then $x^{-\tau}K[x]=K[x]_{-\tau,\infty}$, and
while $x^{-\tau_1}s_1,\dots,x^{-\tau_n}s_n$ belong to this space, we will show that it is possible to
choose $\tilde s_1,\dots,\tilde s_n$ that belong to $K[x]_{-\tau,\delta}$ for some finite~$\delta\in\set Z$.
To this end, note that for any $P = (p_1, \ldots, p_n)\in K[x,x^{-1}]^n$ we have
\begin{equation} \label{EQ:polyred}
  (PV)' = \sum_{i=1}^n (p_i \nu_i)' = \frac{x^\lambda eP' + PB}{x^\lambda e}\,V.
\end{equation}
This motivates the following definition.

\begin{defi}
  Let 
  \[
    \phi_V\colon K[x]_{-\tau+1,\infty}^n \to K[x]_{-\tau,\infty}^n,\quad
    \phi_V(P) = x^\lambda eP' + PB.
  \]
  We call $\phi_V$ the \emph{map for polynomial reduction} with respect to~$V$, and call
  the subspace
\[
 \im(\phi_V) = \bigl\{\phi_V(P) \mid P \in K[x]_{-\tau+1,\infty}^n\bigr\}\subseteq K[x]_{-\tau,\infty}^n
\]
the \emph{subspace for polynomial reduction} with respect to~$V$.
\end{defi}

Note that, by construction and because of Lemma~\ref{LEM:d}, $Q\in K[x]_{-\tau,\infty}^n$ belongs to
$\im(\phi_V)$ if and only if $\frac{1}{x^\lambda e}QV$ is integrable in~$A$. 

We can always view an element of $K[x]_{\eta,\mu}^n$ (resp. $K[x]_{\eta,\mu}^{n\times n}$) as a Laurent polynomial in~$x$
with coefficients in~$K^n$ (resp. $K^{n\times n}$). In this sense we use the notation $\lc(\cdot)$
for the leading coefficient and $\lt(\cdot)$ for the leading term of a vector (resp. matrix).
For example, if $P\in K[x]_{\eta,\mu}^n$ is of the form
\[
  P = p^{(i_1)}x^{i_1} + \dots + p^{(i_m)}x^{i_m},\quad p^{(i)}\in K^n,\quad i_1\leq\cdots\leq i_m,\quad p^{(i_m)}\neq0
\]
then $\deg_x(P)=i_m$, $\lc(P)=p^{(i_m)}$, and $\lt(P)=p^{(i_m)}x^{i_m}$.

Let $\{e_1, \ldots, e_n\}$ be the standard basis of~$K^n$.
Then the $K$-vector space $K[x]_{\eta,\mu}^n$ is generated by
\[
  \cX_{\eta,\mu}:= \bigl\{e_ix^j \mathrel{|} 1\leq i \leq n,\, \eta\leq j\leq \mu\bigr\}.
\]

\begin{defi}
  Let $N_V$ be the $K$-subspace of $K[x]_{-\tau,\infty}^n$ generated by
\[
  \bigl\{t \in \cX_{-\tau,\infty} \mathrel{|} t \neq \lt(P) \ \text{for all $P\in \im(\phi_V)$}\bigr\}.
\]
Then $K[x]_{-\tau,\infty}^n = \im(\phi_V) \oplus N_V$.
We call $N_V$ the \emph{standard complement} of $\im(\phi_V)$.
For any $P\in K[x]_{-\tau,\infty}^n$, there exist $P_1\in K[x]_{-\tau+1,\infty}^n$ and~$P_2\in N_V$
such that $P=\phi_V(P_1)+P_2$ and
\[
  \frac{1}{x^\lambda e}PV = (P_1V)' + \frac{1}{x^\lambda e}P_2V.
\]
This decomposition is called the \emph{polynomial reduction} of~$P$
with respect to~$V$.
\end{defi}

\begin{prop}\label{PROP:finite}
Let $\lambda\in\set N$, $e\in K[x]$ and $B\in K[x]^{n \times n}$ be such that $x^\lambda eV'=BV$, as before.
If $\deg_x(B) \leq \lambda+\deg_x(e)-1$, then $N_V$ is a finite-dimensional $K$-vector space.
\end{prop}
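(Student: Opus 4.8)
The plan is to show that the standard complement $N_V$ is finite-dimensional by proving that $\phi_V$ ``hits'' all sufficiently high-degree terms, i.e.\ that there is a bound $\delta\in\set Z$ so that every $t=e_i x^j\in\cX_{-\tau,\infty}$ with $j>\delta$ equals $\lt(P)$ for some $P\in\im(\phi_V)$; then $N_V$ is spanned by the finitely many generators $e_i x^j$ with $-\tau\le j\le\delta$. The mechanism is the following leading-term computation: for $P=p\,x^{k}+(\text{lower order})$ with $p\in K^n$ and $k\ge -\tau+1$, equation~\eqref{EQ:polyred} gives
\[
  \phi_V(P)=x^\lambda e P'+PB,
\]
and I want to read off $\lt(\phi_V(P))$. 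First I would set $d:=\deg_x(e)$ and write $e=\epsilon x^d+\cdots$ with $\epsilon\in K\setminus\{0\}$. The term $x^\lambda e P'$ has degree $\lambda+d+k-1$ with leading coefficient $\epsilon k\,p$. By the hypothesis $\deg_x(B)\le\lambda+d-1$, writing $B=B_0 x^{\lambda+d-1}+\cdots$ with $B_0\in K^{n\times n}$, the term $PB$ has degree at most $(\lambda+d-1)+k$ with the coefficient of $x^{\lambda+d-1+k}$ equal to $p B_0$. Hence
\[
  \lc(\phi_V(P))=\epsilon k\,p+p B_0=p\,(\epsilon k\,I+B_0),\qquad\text{at degree }\lambda+d-1+k,
\]
provided this vector is nonzero.

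The key point is then a linear-algebra argument: for all but finitely many integers $k$, the matrix $\epsilon k\,I+B_0$ is invertible (its determinant is, up to the unit $\epsilon^n$, the characteristic polynomial of $-B_0$ evaluated at $k$, a nonzero polynomial in $k$ with at most $n$ integer roots). For every such good $k$ and every $p\in K^n$, choosing $P$ to be a suitable vector (namely, $p_0:=p(\epsilon k I+B_0)^{-1}$ times $x^k$) makes $\lc(\phi_V(P))$ equal to $p$ at degree $\lambda+d-1+k$. Since $k$ ranges over all large integers and $p$ over all of $K^n$, this shows: for every $j$ larger than some finite bound $\delta$ (explicitly, $\delta$ can be taken as $\lambda+d-1$ plus the largest integer root of $\det(\epsilon k I+B_0)$, or just $\lambda+d-1+n$ to be safe) and every $i$, the term $e_i x^j$ is the leading term of some element of $\im(\phi_V)$. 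Therefore no such $e_i x^j$ lies in the spanning set of $N_V$, so $N_V\subseteq K[x]_{-\tau,\delta}^n$, which is finite-dimensional.

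I would organize the write-up as: (i) fix notation $d=\deg_x(e)$, $\epsilon=\lc(e)$, $B_0=$ coefficient of $x^{\lambda+d-1}$ in $B$; (ii) the leading-coefficient formula for $\phi_V(p\,x^k+\cdots)$ displayed above, with a one-line justification from the degree hypothesis on $B$ that no higher-degree cancellation occurs; (iii) the observation that $\det(\epsilon k I+B_0)\ne 0$ for all but at most $n$ values of $k\in\set Z$; (iv) conclude that $e_i x^j\in\{\lt(P):P\in\im(\phi_V)\}$ for all $j$ beyond an explicit bound, hence $N_V$ is spanned by finitely many generators. The main obstacle — really the only subtle point — is making sure the degree bookkeeping is airtight: one must check that when $\epsilon k I+B_0$ happens to be singular for some small $k$, this does not break the argument (it only adds finitely many ``missed'' degrees to the complement), and one must handle the boundary index $k=-\tau+1$ correctly so that $P'$ does not drop out of the domain $K[x]_{-\tau+1,\infty}^n$. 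Neither of these causes real trouble, since we only claim finite-dimensionality, not an exact description of $N_V$; the quantitative bound on $\dim_K(N_V)$ (needed later for the telescoper order bound) can be extracted from the same computation but is not required for this proposition.
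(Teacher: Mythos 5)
Your proposal is correct and is essentially the paper's own proof: the same leading-term computation $\lc(\phi_V(P))=\lc(P)\bigl(\mu\,\lc(e)I_n+B_0\bigr)$ at degree $\mu+\lambda+\deg_x(e)-1$, and the same conclusion that this matrix is invertible for all but finitely many $\mu$, so only finitely many generators $e_ix^j$ can survive into $N_V$; the paper merely splits into the cases $\deg_x(B)<\lambda+\deg_x(e)-1$ (your $B_0=0$) and $\deg_x(B)=\lambda+\deg_x(e)-1$, phrasing the invertibility condition via eigenvalues of $\lc(B)$ rather than via the determinant polynomial. One small caveat: your fallback bound $\delta=\lambda+\deg_x(e)-1+n$ is not valid, because although $\det(\epsilon kI_n+B_0)$ has at most $n$ integer roots, those roots need not be bounded by $n$ (cf.\ the paper's $\ell$, which is not bounded a priori); your primary choice of $\delta$ via the largest integer root is correct and is all that is needed for finite-dimensionality.
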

\begin{proof}
For brevity, let $\delta:=\lambda+\deg_x(e)-1$. We distinguish two cases.

\smallskip
{\it Case 1.}
Assume that $\deg_x(B) < \delta$. For any $P\in K[x]_{-\tau+1,\infty}^n$ of degree~$\mu$, we have
\[
  \lt\bigl(\phi_V(P)\bigr) = \mu\lc(e)\lc(P)x^{\mu+\delta}.
\]
Thus all monomials $e_i x^j$ with $1\leq i\leq n$ and $j\geq \delta+1$ are not in~$N_V$,
and thus $\dim N_V\leq n(\tau+\delta+1)<\infty$.

\smallskip
{\it Case 2.}
Assume that $\deg_x(B)=\delta$. For any $P\in K[x]_{-\tau+1,\infty}^n$ of degree~$\mu$, we have
\[
  \lt\bigl(\phi_V(P)\bigr) = \lc(P)(\mu\lc(e)I_n + \lc(B))x^{\mu+\delta}.
\]
Let $\ell$ be the largest nonnegative integer such that $-\ell \lc(e)$ is an
eigenvalue of $\lc(B)\in K^{n\times n}$ (or let $\ell=0$ if no such integers exist).
Then for any $\mu>\ell$, the matrix $\mu\lc(e)I_n + \lc(B)\in K^{n\times n}$ is invertible.
So any monomial $e_ix^j$ with $j> \ell+\delta$ is not in~$N_V$ for any $i=1, \ldots, n$,
and thus $\dim N_V\leq n(\tau+\delta+\ell+1)<\infty$.
\end{proof}

It follows from our general assumptions on $V$ that the
condition $\deg_x(B) \leq \lambda+\deg_x(e)-1$ is always satisfied.
Therefore, by the combination of Hermite reduction described in Section~\ref{sec:hermite}
with polynomial reduction, we get the following theorem.

\begin{theorem}\label{THM:polyred}
Let $W\in A^n$ be an integral basis of~$A$ that is normal at infinity.
Let
\[
  T = \diag\bigl(x^{\tau_1}, \ldots, x^{\tau_n}\bigr) \in K(x)^{n\times n}
\]
be such that $V := TW$ is a local integral basis at infinity.  Let $e\in K[x]$,
$\lambda \in \bN$, and $B, M \in K[x]^{n \times n} $ be such that $eW' = MW$
and $x^\lambda eV' = BV$.  Then any element $f\in A$ can be decomposed into
\begin{equation}\label{EQ:add}
  f = g' + \frac{1}{d} RW + \frac{1}{x^\lambda e} QV,
\end{equation}
where $g\in A$, $d\in K[x]$ is squarefree and $\gcd(d, e)=1$, $R, Q\in K[x]^n$
with $\deg_x(R) < \deg_x(d)$ and $Q\in N_V$, which is a finite-dimensional
$K$-vector space. Moreover, $R=Q=0$ if and only if $f$ is integrable in~$A$.
\end{theorem}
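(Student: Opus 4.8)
The plan is to obtain the decomposition~\eqref{EQ:add} by simply concatenating the three reductions already in place, and then to prove the ``moreover'' part by stripping off the two remainder pieces one after the other. For the decomposition, first apply the Hermite reduction of Section~\ref{sec:hermite} with respect to the global integral basis~$W$; this writes $f=g_1'+\sum_i\frac{h_i}{de}\omega_i$ with $d\in K[x]$ squarefree, $\gcd(d,e)=1$ and $\gcd(d,h_1,\dots,h_n)=1$. The extended Euclidean algorithm then produces $r_i,s_i\in K[x]$ with $h_i=r_ie+s_id$ and $\deg_x(r_i)<\deg_x(d)$, so that $\sum_i\frac{h_i}{de}\omega_i=\frac1d RW+\sum_i\frac{s_i}{e}\omega_i$ with $R=(r_1,\dots,r_n)^T$, and the two gcd conditions force $\gcd(d,r_1,\dots,r_n)=1$. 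Passing from $W$ to~$V$ via $\omega_i=x^{-\tau_i}\nu_i$ turns $\sum_i\frac{s_i}{e}\omega_i$ into $\frac1{x^\lambda e}PV$ with $P=(x^{\lambda-\tau_1}s_1,\dots,x^{\lambda-\tau_n}s_n)^T$; since $\lambda\ge0$ and the $s_i$ are polynomials, $P\in K[x]_{-\tau,\infty}^n$ with $\tau=\max_i\tau_i$. Finally, the polynomial reduction with respect to~$V$ writes $P=\phi_V(P_1)+Q$ with $P_1\in K[x]_{-\tau+1,\infty}^n$ and $Q\in N_V$, so that $\frac1{x^\lambda e}PV=(P_1V)'+\frac1{x^\lambda e}QV$; putting $g:=g_1+P_1V\in A$ yields~\eqref{EQ:add}. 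That $N_V$ is finite-dimensional follows from Proposition~\ref{PROP:finite}, whose hypothesis $\deg_x(B)\le\lambda+\deg_x(e)-1$ is exactly what Lemma~\ref{lemma:degM} gives when applied to the local integral basis~$V$ at infinity with the relation $x^\lambda eV'=BV$ (taking $x^\lambda e$ in the role of~$e$ there, which is legitimate because $\gcd(x^\lambda e,b_{1,1},\dots,b_{n,n})=1$).

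For the ``moreover'' part, ``$\Leftarrow$'' is immediate since then $f=g'$. For ``$\Rightarrow$'', assume $f$ is integrable, so that $\tilde h:=\frac1d RW+\frac1{x^\lambda e}QV=f-g'$ is integrable, say $\tilde h=G'$ with $G\in A$. I would first show that $G$ is integral at every finite place other than possibly~$0$: at each root of~$d$ the summand $\frac1d RW$ has at most a simple pole (as $d$ is squarefree and $W$ is an integral basis) and at each root of~$e$ different from~$0$ the summand $\frac1{x^\lambda e}QV$ has at most a simple pole (as $e$ is squarefree and $V$ is locally integral there), so $\tilde h$ has $(x-a)$-expansions of valuation $\ge-1$ at every finite $a\ne0$; since a local antiderivative of such an expansion necessarily has valuation $\ge0$, $G$ is integral there. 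By the pole-detecting property of the integral basis~$W$ we may therefore write $G=\sum_i\beta_i\omega_i$ with $\beta_i\in K[x,x^{-1}]$. Differentiating this and using $eW'=MW$ shows that in the $W$-basis every coefficient of $\tilde h=G'$ lies in $\frac1e K[x,x^{-1}]$; comparing with $\tilde h=\sum_i\bigl(\frac{r_i}{d}+\frac{x^{\tau_i}Q_i}{x^\lambda e}\bigr)\omega_i$ and using $\frac{x^{\tau_i}Q_i}{x^\lambda e}\in\frac1e K[x,x^{-1}]$ gives $\frac{r_i}{d}\in\frac1e K[x,x^{-1}]$ for every~$i$, i.e.\ $r_i/d$ has poles only among $\{0\}\cup\{\text{roots of }e\}$. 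Since $\gcd(d,e)=1$, the poles of $r_i/d$ lie among the roots of~$d$ and are hence disjoint from the roots of~$e$; assuming moreover $\gcd(d,x)=1$ — which may be arranged by first absorbing any factor~$x$ of~$d$ into the $V$-side and re-running the polynomial reduction, at the expense of enlarging $N_V$ (replacing $\tau$ by~$\tau+1$) — it follows that $r_i/d$ has no pole at all, i.e.\ $d\mid r_i$ for every~$i$. Then $\gcd(d,r_1,\dots,r_n)=1$ forces $d\in K$, and $\deg_x(R)<\deg_x(d)=0$ forces $R=0$. With $R=0$, $\tilde h=\frac1{x^\lambda e}QV$ is integrable, so $Q\in\im(\phi_V)$ by the remark following the definition of~$\phi_V$; but $Q\in N_V$ and $K[x]_{-\tau,\infty}^n=\im(\phi_V)\oplus N_V$, whence $Q=0$.

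I expect the delicate point to be exactly the reduction to the case $\gcd(d,x)=1$: at $x=0$ the two remainder pieces $\frac1d RW$ and $\frac1{x^\lambda e}QV$ can both be singular, so they cannot be separated there without the preprocessing that absorbs the factor~$x$ of~$d$, or, alternatively, a direct local analysis at~$0$ of the one remaining case $d=cx$, $c\in K^\ast$, $R\in K^n$ (where one uses $\gcd(x,r_1,\dots,r_n)=1$ to see that $\frac1c\sum_ir_i\omega_i$ does not vanish at~$0$, together with a closer look at the coefficients of $G'$ at~$0$). Everything else — the pole bookkeeping for~$G$, the denominator comparison, and the final appeal to $\im(\phi_V)\cap N_V=\{0\}$ — is routine once this point is settled.
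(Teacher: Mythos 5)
Your construction of the decomposition~\eqref{EQ:add} (Hermite reduction, Euclidean split $h_i=r_ie+s_id$, change of basis to~$V$, polynomial reduction, and the appeal to Lemma~\ref{lemma:degM} and Proposition~\ref{PROP:finite} for $\dim_K N_V<\infty$) is correct and is exactly the paper's argument, as is your derivation of $Q=0$ from $Q\in\im(\phi_V)\cap N_V$. The problem is your route to $R=0$. You work with the \emph{final} remainder $\frac1d RW+\frac1{x^\lambda e}QV$ and try to separate the two pieces by comparing denominators in the $W$-basis; as you yourself observe, this breaks down at $x=0$ when $x\mid d$, which can certainly happen (nothing forces $x\mid e$, while the integrand may well have a pole at~$0$). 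Neither of your two proposed repairs closes the gap as stated: the ``absorb the factor $x$ into the $V$-side'' preprocessing forces the numerators into $K[x]_{-\tau-1,\infty}^n$, so you must enlarge $\phi_V$ and redefine $N_V$ --- i.e.\ you end up proving a variant of the theorem with a different complement space, not the theorem whose $N_V$ is fixed by $\tau=\max_i\tau_i$ --- and the ``direct local analysis at $0$'' is only gestured at, not carried out.

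The paper avoids this entirely by extracting $d\in K$ \emph{before} the split and the basis change: the Hermite remainder $h=\sum_i\frac{h_i}{de}\,\omega_i$ differs from $f$ by an exact derivative, so it is integrable whenever $f$ is, and Lemma~\ref{LEM:d} (applied to $h$ in the global integral basis $W$, where the hypotheses $\gcd(d,e)=1$, $d$ squarefree, $\gcd(d,h_1,\dots,h_n)=1$ hold by construction) immediately gives $d\in K$; then $\deg_x(R)<\deg_x(d)=0$ forces $R=0$, with no case distinction at $x=0$ and no tampering with $N_V$. I recommend you replace your denominator-comparison argument by this one-line application of Lemma~\ref{LEM:d}; the rest of your proof then goes through unchanged.
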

\begin{proof}
After performing the Hermite reduction on $f$, we get
\begin{equation}\label{EQ:add1}
  f = \tilde{g}' + \frac{1}{d} RW + \frac{1}{e} SW,
\end{equation}
where~$R = (r_1, \ldots, r_n)\in K[x]^n$ and $S = (s_1, \ldots, s_n)\in K[x]^n$
with~$r_i, s_i$ introduced in~\eqref{EQ:h}. By Lemma~\ref{LM:CB}, there exists
$T = \diag(x^{\tau_1}, \ldots, x^{\tau_n}) \in K(x)^{n\times n}$ such that $V = TW$
is a local integral basis at infinity. By the same lemma it follows
that $V$ is also normal at $0$ and integral at all other points.
The derivative of~$V$ can be reexpressed in the same basis as
\[
  V' = (TW)' = \biggl(T' + \frac{1}{e}TM\biggr)T^{-1}V = \frac{1}{x^\lambda e}BV,
\]
Since $V$ is a local integral basis at infinity, it follows by
Lemma~\ref{lemma:degM} that $\deg_x(B) \leq \lambda + \deg_x(e)-1$, which is a
prerequisite for applying Proposition~\ref{PROP:finite}; hence the
corresponding space~$N_V$ is finite-dimensional, as claimed.
We rewrite the last summand in~\eqref{EQ:add1} w.r.t. the new basis~$V$:
\[
  \frac{1}{e} SW = \frac{1}{x^\lambda e} \tilde{S}V,
\]
where $\tilde{S} = x^\lambda S T^{-1} \in K[x,x^{-1}]^n$. Note that the entries
of~$\tilde{S}$ are not necessarily polynomials, but Laurent polynomials in~$x$.
Indeed, because of Lemma~\ref{lemma:nopoles-space}, some of the $\tau_i$
may be positive, and actually, $\tilde{S}\in K[x]_{-\tau,\infty}^n$ where
$\tau:=\max\{\tau_1,\dots,\tau_n\}$, as before. Next, using the polynomial reduction, we decompose
$\tilde{S}$ into $\tilde{S} = \phi_{V}(S_1) + S_2$ with
$S_1\in K[x]_{-\tau+1,\infty}^n$ and $S_2\in N_V$, which means that
\[
  \frac{1}{e} SW = (S_1 V)' + \frac{1}{x^\lambda e} S_2 V.
\]
We finally obtain the decomposition~\eqref{EQ:add} by setting
$g = \tilde{g} + S_1 V$ and $Q = S_2$.

For the last assertion, assume that $f$ is integrable (the other direction of
the equivalence holds trivially). Then Lemma~\ref{LEM:d} implies that $d\in K$,
and therefore $R$ must be zero because $\deg_x(R) < \deg_x(d)$. Hence the last
summand in~\eqref{EQ:add} is also integrable, i.e., there exist $c_i\in K(x)$
such that
\[
  \frac{1}{x^\lambda e} QV = \Biggl(\sum_{i=1}^n c_i \nu_i\Biggr)'.
\]
Note that the expression on the left-hand side has only simple poles at finite
points except~$0$. Therefore, by Lemma~\ref{LEM:d}, its integral is integral
at all nonzero finite points. In other words, the coefficients $c_i$ are actually
Laurent polynomials in $K[x,x^{-1}]$, which implies that $Q \in \im(\phi_V)$.
Since $\im(\phi_V) \cap N_V = \{0\}$, it follows that $Q=0$.
\end{proof}
The decomposition in~\eqref{EQ:add} is called an \emph{additive decomposition} of~$f$ with respect to~$x$.
We now discuss how to compute telescopers for elements of~$A$ via Hermite reduction and
polynomial reduction.

We first consider the additive decompositions of the successive derivatives $\partial_t^i\cdot f$ for $i\in \bN$.
Assume that
\begin{equation}\label{EQ:tmat}
  \partial_t\cdot W = \frac{1}{\tilde{e}} \tilde{M}W
  \quad \text{and}\quad
  \partial_t\cdot V = \frac{1}{\tilde{a}} \tilde{B}V,
\end{equation}
for some polynomials $\tilde{e},\tilde{a}\in K[x]$ and matrices
$\tilde{M},\tilde{B}\in K[x]^{n\times n}$ such that $\tilde{e}$ is coprime
with~$\tilde{M}$ and $\tilde{a}$ is coprime with~$\tilde{B}$.  By
\cite[Prop.~7]{chen14a}, we have that $\tilde e \mid e$ and $\tilde{a} \mid
x^\lambda e$.  Hence, we can take $\tilde{e} = e$ and $\tilde{a} = x^\lambda
e$ in~\eqref{EQ:tmat}, by multiplying the matrices $\tilde{M}$ and $\tilde{B}$
by some factors of~$x^\lambda e$. Now we differentiate~\eqref{EQ:add} with
respect to~$t$, and obtain, after a direct calculation, $\partial_t\cdot f =
(\partial_t\cdot g)' + h$, where
\[
  h =
    \left(\partial_t\cdot\biggl(\frac{1}{d}R\biggr) + \frac{1}{de}R\tilde{M}\right)W +
    \left(\partial_t\cdot\biggl(\frac{1}{x^\lambda e}Q\biggr) + \frac{1}{x^{2\lambda} e^2}Q\tilde{B}\right)V.
\]
Obviously the squarefree part of the denominator of~$h$
divides~$xde$. Applying Hermite reduction and polynomial reduction to~$h$
then yields
\[
  h = \tilde g_1' + \frac{1}{d} R_1W + \frac{1}{x^\lambda e} Q_1V,
\]
where $R_1, Q_1\in K[x]^n$ with $\deg_x(R_1) < \deg_x(d)$ and $Q_1\in N_V$.
Repeating this discussion, we get the following lemma.
\begin{lemma}\label{LEM:idtf}
For any $i\in \bN$, the derivative $\partial_t^i\cdot f$ has an additive decomposition of the form
\[ \partial_t^i\cdot f = g_i' + \frac{1}{d} R_iW + \frac{1}{x^\lambda e} Q_iV,\]
where $g_i\in A$, $R_i, Q_i\in K[x]^n$ with $\deg_x(R_i) < \deg_x(d)$ and $Q_i\in N_V$.
\end{lemma}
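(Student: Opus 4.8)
The plan is to prove Lemma~\ref{LEM:idtf} by induction on~$i$, using the additive decomposition~\eqref{EQ:add} of Theorem~\ref{THM:polyred} as the base case $i=0$ (with $g_0=g$, $R_0=R$, $Q_0=Q$). For the inductive step, I would assume the claimed form for $\partial_t^i\cdot f$ and differentiate both sides with respect to~$t$, exactly as in the calculation carried out just before the lemma statement. The point is that $\partial_t$ commutes with $\partial_x$, so $(\partial_t g_i)'$ absorbs the derivative of the first summand, and the Leibniz rule applied to $\frac{1}{d}R_iW$ and $\frac{1}{x^\lambda e}Q_iV$ — together with the relations $\partial_t\cdot W=\frac{1}{e}\tilde MW$ and $\partial_t\cdot V=\frac{1}{x^\lambda e}\tilde BV$ from~\eqref{EQ:tmat} — produces an element $h\in A$ whose coefficient denominators are controlled: the squarefree part divides $x\,d\,e$.

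First I would make precise the denominator bookkeeping: since $d$ is squarefree and coprime to~$e$, and since the only new denominator factors introduced by $\partial_t$ are (powers of) $x$ and~$e$ (by the divisibility $\tilde e\mid e$, $\tilde a\mid x^\lambda e$ cited from \cite[Prop.~7]{chen14a}), the element $h$ lies in the span of $W$ with coefficients having denominators dividing a suitable power of $d\,x^\lambda e$. Then I would apply Hermite reduction to~$h$ to pull out an integrable part $\tilde g_1'$ and reduce the finite-place denominator back down to its squarefree part; because $\gcd(d,e)=1$ and $d$ is squarefree, the Hermite remainder has the form $\frac1d R_1 W + \frac1e S_1 W$ with $\deg_x(R_1)<\deg_x(d)$, just as in~\eqref{EQ:add1}. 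Rewriting the second summand in the basis $V=TW$ and applying polynomial reduction (valid since $\deg_x(B)\le\lambda+\deg_x(e)-1$ by Lemma~\ref{lemma:degM}, as established in the proof of Theorem~\ref{THM:polyred}) yields $Q_{i+1}\in N_V$ and folds the integrable part into $g_{i+1}$. Collecting terms gives $\partial_t^{i+1}\cdot f=g_{i+1}'+\frac1d R_{i+1}W+\frac1{x^\lambda e}Q_{i+1}V$ of the required shape.

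The only genuine subtlety — and the step I would be most careful about — is verifying that the denominator~$d$ does \emph{not} grow under the induction: a priori, differentiating $\frac1d R_i W$ with respect to~$t$ and then applying $\partial_t\cdot W=\frac1e\tilde MW$ could threaten to introduce extra finite-place poles, but these are all absorbed by Hermite reduction back into a squarefree denominator dividing the fixed squarefree polynomial~$d$ (no root of~$d$ is a root of~$e$ or equals~$0$, so no cancellation or escalation occurs there). Likewise one must check that the polynomial-reduction step keeps $Q_{i+1}$ inside the \emph{same} finite-dimensional space $N_V$ and does not enlarge the Laurent-exponent window; this is immediate because $N_V$ is the fixed standard complement of $\im(\phi_V)$ and polynomial reduction by definition lands in~$N_V$. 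Everything else is a routine, if slightly tedious, Leibniz-rule computation of the type already displayed in the paragraph preceding the lemma, so I would not reproduce it in full but simply remark that it carries over verbatim to each successive derivative.
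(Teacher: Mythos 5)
Your proposal is correct and follows essentially the same route as the paper: the paper's own proof is exactly the paragraph preceding the lemma (differentiate the decomposition~\eqref{EQ:add} with respect to~$t$, use $\tilde e\mid e$ and $\tilde a\mid x^\lambda e$ to keep the squarefree part of the new denominator inside $xde$, then re-apply Hermite and polynomial reduction), capped by the remark ``Repeating this discussion, we get the following lemma.'' You merely make the induction explicit and spell out the denominator bookkeeping, which is a faithful rendering of the paper's argument.
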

As application of the above lemma, we can compute the minimal telescoper for $f$ by finding the first
linear dependence among the $(R_i, Q_i)$ over~$K$. We also obtain an upper bound for the order of telescopers.
\begin{corollary}
Every $f\in A$ has a telescoper of order at most $n\deg_x(d) + \dim_K(N_V)$.
\end{corollary}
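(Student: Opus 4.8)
The plan is to combine Lemma~\ref{LEM:idtf} with a pigeonhole argument over~$K$. Set $r:=n\deg_x(d)+\dim_K(N_V)$. First I would fix, for each $i=0,1,\dots,r$, one additive decomposition
\[
  \partial_t^i\cdot f = g_i' + \frac1d R_iW + \frac1{x^\lambda e}Q_iV,\qquad g_i\in A,
\]
as supplied by Lemma~\ref{LEM:idtf}, so that $R_i\in K[x]^n$ with $\deg_x(R_i)<\deg_x(d)$ and $Q_i\in N_V$. The point is that each pair $(R_i,Q_i)$ lies in the $K$-vector space $\mathcal R\oplus N_V$, where $\mathcal R=\{R\in K[x]^n:\deg_x(R_j)<\deg_x(d)\text{ for all }j\}$ has $\dim_K\mathcal R=n\deg_x(d)$; hence $\dim_K(\mathcal R\oplus N_V)=r$, and $N_V$ is indeed finite-dimensional by Proposition~\ref{PROP:finite}.

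Next I would use that any $r+1$ vectors in an $r$-dimensional $K$-vector space are $K$-linearly dependent: there exist $p_0,\dots,p_r\in K$, not all zero, with $\sum_{i=0}^r p_iR_i=0$ and $\sum_{i=0}^r p_iQ_i=0$. Since $t$ is an $x$-constant, each $p_i$ satisfies $p_i'=0$, so applying $P:=\sum_{i=0}^r p_i\partial_t^i$ to~$f$ and plugging in the fixed decompositions gives
\[
  P\cdot f = \Bigl(\sum_{i=0}^r p_ig_i\Bigr)' + \frac1d\Bigl(\sum_{i=0}^r p_iR_i\Bigr)W + \frac1{x^\lambda e}\Bigl(\sum_{i=0}^r p_iQ_i\Bigr)V = \Bigl(\sum_{i=0}^r p_ig_i\Bigr)'.
\]
Thus $P\cdot f$ is integrable in~$A$ with certificate $\sum_{i=0}^r p_ig_i$, while $P$ is nonzero and free of~$x$; that is, $P$ is a telescoper for~$f$ of order at most $r=n\deg_x(d)+\dim_K(N_V)$.

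I do not expect a genuine obstacle here: the corollary is a one-step consequence of Lemma~\ref{LEM:idtf}, since all the substantive work is already done upstream, namely that differentiation with respect to~$t$ followed by Hermite and polynomial reduction preserves the squarefree denominator~$d$ and keeps the $V$-part inside $\frac1{x^\lambda e}N_V$ (the discussion preceding Lemma~\ref{LEM:idtf}), together with the finite-dimensionality of~$N_V$ (Proposition~\ref{PROP:finite}, resting on the degree bound $\deg_x(B)\le\lambda+\deg_x(e)-1$ from Lemma~\ref{lemma:degM}). The only subtlety to flag in the write-up is that the additive decomposition of $\partial_t^i\cdot f$ need not be unique: it suffices to pin down one decomposition per~$i$ \emph{before} counting dimensions, so that the linear relation among the $(R_i,Q_i)$ transfers verbatim to the relation $P\cdot f=\bigl(\sum_i p_ig_i\bigr)'$ because the $g_i$ were chosen together with the $R_i,Q_i$.
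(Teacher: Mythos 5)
Your proof is correct and follows essentially the same route as the paper: the paper leaves the corollary as an immediate consequence of Lemma~\ref{LEM:idtf}, obtained by finding a $K$-linear dependence among the pairs $(R_i,Q_i)$, which live in a space of dimension $n\deg_x(d)+\dim_K(N_V)$. Your explicit dimension count and the remark about fixing one decomposition per $i$ before extracting the relation are exactly the (unwritten) details the paper intends.
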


\begin{example}
We compute a minimal telescoper for the function
\[
  F := \frac{1}{x^2} \log\Bigl(\frac{1}{x^2}-t^2\Bigr)\sqrt{\frac{1+tx}{1-tx}}.
\]
Note that for $t=1$ we
obtain the function from Example~\ref{ex:hr}. The operator~$L$ with $LF=0$
and the integral basis $\{\omega_1,\omega_2\}$ for
$A=\bC(t,x)[\partial_x]/\<L>$ are very similar to those in Example~\ref{ex:hr}.
As before, $F$ is represented by $f=1\in A$.
Also the computation of its Hermite reduction is analogous, yielding
\[
  h = \frac{-t^3x^2-t^2x+3t}{(t^2x^2-1)x} \omega_1 - \frac{t}{(t^2x^2-1)x} \omega_2
\]
as the Hermite remainder~$h$. The matrix~$M$ representing the differentiation
of the $\omega_i$ does not satisfy the degree condition of
Proposition~\ref{PROP:finite}; this fact is already visible
in~\eqref{eq:dmat}. Hence we perform a change of basis to
$\nu_1=x^{-1}\omega_1$, $\nu_2=x^{-2}\omega_2$, which is an integral basis at
infinity. We have $xeV'=BV$ for $V=(\nu_1,\nu_2)^T$, $e=x(t^2x^2-1)$, and
some matrix~$B\in K[x]^{2\times2}$ with $\deg_x(B)=3$; since
$\delta=\lambda+\deg_x(e)-1=3$ we are in Case~2 of
Proposition~\ref{PROP:finite}.  By investigating the eigenvalues of $\lc(B)$
we find that $\ell=1$. In order to determine a basis for~$N_V$ and to execute
the polynomial reduction conveniently, we consider the matrix whose rows are
constituted by $\phi_V(t)$ for all $t\in\cX_{0,\ell}$, written in the
basis~$\cX_{0,\ell+\delta}$. The echelon form of this $4\times10$ matrix is
\begin{equation}\label{eq:M_ell}
  \begin{pmatrix}
  1 & 0 & 0 & 0 & 0 & 2/t^3 & 0 & 4/t^4 & -4/t^4 & 0 \\
  0 & 0 & 1 & 1/t & 0 & 0 & 0 & -4/t^3 & 4/t^3 & 0 \\
  0 & 0 & 0 & 0 & 1 & 1/t & 0 & 0 & 0 & 0 \\
  0 & 0 & 0 & 0 & 0 & 0 & 1 & 0 & 0 & 0
  \end{pmatrix}.
\end{equation}
Writing the Hermite remainder~$h$ in terms of the $\nu_i$, we see that all
denominators divide the polynomial~$xe$. Thus we write
$h=\frac{1}{xe}(h_1\nu_1+h_2\nu_2)$ and in~\eqref{EQ:h} we get
$s_1=h_1=-t^3x^4-t^2x^3+3tx^2$, $s_2=h_2=-tx^3$, and $r_1=r_2=0$.
In the basis $\cX_{0,4}$ the vector $(s_1,s_2)$ reads
\[
  \bigl(-t^3, 0, -t^2, -t, 3 t, 0, 0, 0, 0, 0\bigr).
\]
The polynomial reduction now corresponds to reducing this vector with the rows
of~\eqref{eq:M_ell}, yielding the final result $[f]=-\frac{x^2}{xe}\nu_2$.  Next, we
consider the derivative $\partial_t\cdot f=tx^3\partial_x+2tx^2+x\in A$ whose
Hermite remainder is
\[
  \frac{1}{(tx+1)x} \omega_1 + \frac{1}{(t^2x^2-1)x} \omega_2.
\]
(Note that we could as well take $\partial_t\cdot [f]$ instead of $\partial_t\cdot f$, which
in general should result in a faster algorithm.)
After polynomial reduction we obtain
\[
  \frac{1}{xe} \biggl( -\frac{4}{t^2} \nu_1 + \frac{(tx+4)x}{t^2} \nu_2 \biggr).
\]
Since there is no linear dependence over $\bC(t)$ yet, we continue with
\[
  \partial_t^2\cdot f =
  \frac{(tx+1)x^3}{1-tx} \partial_x -
  \frac{x^2(2t^4x^4+5t^3x^3+2t^2x^2-5tx-3)}{(t^2x^2-1)^2}.
\]
Writing $\partial_t^2\cdot f$ in terms of the integral basis produces the denominator
$d=(tx-1)^3(tx+1)^2x$, which means that the Hermite reduction consists of three
reduction steps. As a Hermite remainder we obtain
\[
  \frac{t^2x^2+2tx-4}{(t^2x^2-1)tx} \omega_1 + \frac{2}{(t^2x^2-1)tx} \omega_2
\]
which by polynomial reduction is converted into
\[
  \frac{1}{xe} \Bigl( -\frac{4}{t^3} \nu_1 + \frac{2(tx+2)x}{t^3} \nu_2 \Bigr).
\]
Now we can find a linear dependence that gives rise to the telescoper
$t^2\partial_t^2-t\partial_t+1$, which is indeed the minimal one for this example.
\end{example}

\section*{Acknowledgements}

We would like to thank Ruyong Feng and Michael F.\ Singer for helpful discussions.

\bibliographystyle{elsarticle-harv}
\bibliography{Hermite}

\end{document}